\newtheorem{proposition}{Proposition}
    \patchcmd\algocf@Vline{\vrule}{\vrule \kern-0.4pt}{}{}
    \patchcmd\algocf@Vsline{\vrule}{\vrule \kern-0.4pt}{}{}
\definecolor{darkgrey}{gray}{0.3}
\definecolor{commentcolor}{gray}{0.5}
\definecolor{darkgreen}{RGB}{0,125,0}
\tikzset{
  fitting node/.style={
    inner sep=0pt,
    fill=none,
    draw=none,
    reset transform,
    fit={(\pgf@pathminx,\pgf@pathminy) (\pgf@pathmaxx,\pgf@pathmaxy)}
  },
  reset transform/.code={\pgftransformreset}
}
\tikzset{cross/.style={path picture={
  \draw[black]
(path picture bounding box.south east) -- (path picture bounding box.north west) (path picture bounding box.south west) -- (path picture bounding box.north east);
}}}
\tikzstyle{ox}=[semithick,draw=black,circle,cross,inner sep=1.2mm]
\title{ESCHER: Eschewing Importance Sampling in Games by Computing a History Value Function to Estimate Regret}
\author{%
  Stephen McAleer \\
  Carnegie Mellon University\\
  \texttt{smcaleer@cs.cmu.edu} \\
  \And
  Gabriele Farina \\
  Carnegie Mellon University\\
  \texttt{gfarina@cs.cmu.edu} \\
  \And
  Marc Lanctot \\
  DeepMind \\
  \texttt{lanctot@deepmind.com} \\
  \And
  Tuomas Sandholm \\ 
  Carnegie Mellon University\\
  Strategy Robot, Inc.\\
  Optimized Markets, Inc.\\
  Strategic Machine, Inc.\\
  \texttt{sandholm@cs.cmu.edu} \\
}
\begin{document}

\maketitle

\begin{abstract}
Recent techniques for approximating Nash equilibria in very large games leverage neural networks to learn approximately optimal policies (strategies). One promising line of research uses neural networks to approximate counterfactual regret minimization (CFR) or its modern variants.
DREAM, the only current CFR-based neural method that is model free and therefore scalable to very large games, trains a neural network on an estimated regret target that can have extremely high variance due to an importance sampling term inherited from Monte Carlo CFR (MCCFR). In this paper we propose an unbiased model-free method that does not require any importance sampling. Our method, ESCHER, is principled and is guaranteed to converge to an approximate Nash equilibrium with high probability. 
We show that the variance of the estimated regret of ESCHER is orders of magnitude lower than DREAM and other baselines. We then show that ESCHER outperforms the prior state of the art---DREAM and neural fictitious self play (NFSP)---on a number of games and the difference becomes dramatic as game size increases. In the very large game of dark chess, ESCHER is able to beat DREAM and NFSP in a head-to-head competition over $90\%$ of the time.  
\end{abstract}

\section{Introduction}
A core challenge in computational game theory is the problem of learning strategies that approximate Nash equilibrium in very large imperfect-information games such as Starcraft~\citep{alphastar}, dark chess~\citep{zhang2021subgame}, and Stratego~\citep{mcaleer2020pipeline, perolat2022mastering}. Due to the size of these games, tabular game-solving algorithms such as 
\emph{counterfactual regret minimization (CFR)} are unable to produce such equilibrium strategies. 
To sidestep the issue, in the past stochastic methods such as \emph{Monte-Carlo CFR (MCCFR)} have been proposed. These methods use computationally inexpensive unbiased estimators of the regret (i.e., utility gradient) of each player, trading off speed for convergence guarantees that hold with high probability rather than in the worst case.
Several unbiased estimation techniques of utility gradients are known. Some, such as  \emph{external sampling}, produce low-variance gradient estimates that are dense, and therefore are prohibitive in the settings mentioned above. Others, such as \emph{outcome sampling}, produce high-variance estimates that are sparse and can be computed given only the realization of play, and are therefore more appropriate for massive games.

However, even outcome-sampling MCCFR is inapplicable in practice. First, since it is a tabular method, it can only update regret on information sets that it has seen during training. In very large games, only a small fraction of all information sets will be seen during training. Therefore, generalization (via neural networks) is necessary. 
Second, to achieve unbiasedness of the utility gradient estimates, outcome-sampling MCCFR uses importance sampling (specifically, it divides the utility of each terminal state by a reach probability, which is often tiny), 
leading to estimates with extremely large magnitudes and high variance. This drawback is especially problematic when MCCFR is implemented using function approximation, as the high variance of the updates can cause instability of the neural network training.

Deep CFR~\citep{deep_cfr} addresses the first shortcoming above by training a neural network to estimate the regrets cumulated by outcome-sampling MCCFR, but is vulnerable to the second shortcoming, causing the neural network training procedure to be unstable. 
DREAM~\citep{steinberger2020dream} improves on Deep CFR by \emph{partially} addressing the second shortcoming by using a history-based value function as a baseline~\citep{Schmid19VRMCCFR}. This baseline greatly reduces the variance in the updates and is shown to have better performance than simply regressing on the MCCFR updates. However, DREAM still uses importance sampling to remain unbiased. So, while DREAM was shown to work in small artificial poker variants, it is still vulnerable to the high variance of the estimated counterfactual regret and indeed we demonstrate that in games with long horizons and/or large action spaces, this importance sampling term causes DREAM to fail.

%

In this paper, we introduce \textit{\textbf{E}schewing importance \textbf{S}ampling by \textbf{C}omputing a \textbf{H}istory value function to \textbf{E}stimate \textbf{R}egret (ESCHER)}, a method that is unbiased, low variance, and does not use importance sampling. ESCHER is different from DREAM in two important ways, both of which we show are critical to achieving good performance. First, instead of using a history-dependent value function as a baseline, ESCHER uses one directly as an estimator of the counterfactual value. Second, ESCHER does not multiply estimated counterfactual values by an importance-weighted reach term. To remove the need to weight by the reach to the current information state, ESCHER samples actions from a fixed sampling policy that does not change from one iteration to the next. Since this distribution is static, our fixed sampling policy simply weights certain information sets more than others. When the fixed sampling policy is close to the balanced policy (i.e., one where each leaf is reached with equal probability), these weighting terms minimally affect overall convergence of ESCHER with high probability.

We find that ESCHER has orders of magnitude lower variance of its estimated regret. 
In experiments with a deep learning version of ESCHER on the large games of phantom tic tac toe, dark hex, and dark chess, we find that ESCHER outperforms NFSP and DREAM, and that the performance difference increases to be dramatic as the size of the game increases. Finally, we show through ablations that both differences between ESCHER and DREAM (removing the bootstrapped baseline and removing importance sampling) are necessary in order to get low variance and good performance on large games. 

\section{Background}
We consider extensive-form games with perfect recall~\citep{OsborneRubinstein94,hansen2004dynamic, kovavrik2022rethinking}. An extensive-form game progresses through a sequence of player actions, and has a \emph{world state} $w \in \mathcal{W}$ at each step. 
In an $N$-player game, $\mathcal{A} = \mathcal{A}_1 \times \cdots \times \mathcal{A}_N$ is the space of joint actions for the players. $\mathcal{A}_i(w) \subseteq \mathcal{A}_i$ denotes the set of legal actions for player $i \in \mathcal{N} = \{1, \ldots, N\}$ at world state $w$ and $a = (a_1, \ldots, a_N) \in \mathcal{A}$ denotes a joint action. At each world state, after the players choose a joint action, a transition function $\mathcal{T}(w, a) \in \Delta^\mathcal{W}$ determines the probability distribution of the next world state $w'$. Upon transition from world state $w$ to $w'$ via joint action $a$, player $i$ makes an \emph{observation} $o_i = \mathcal{O}_i(w,a,w')$. In each world state $w$, player $i$ receives a utility $\mathcal{U}_i(w)$. The game ends when the players reach a terminal world state. In this paper, we consider games that are guaranteed to end in a finite number of actions.

A \emph{history} is a sequence of actions and world states, denoted $h = (w^0, a^0, w^1, a^1, \ldots, w^t)$, where $w^0$ is the known initial world state of the game. $\mathcal{U}_i(h)$ and $\mathcal{A}_i(h)$ are, respectively, the utility and set of legal actions for player $i$ in the last world state of a history $h$. An \emph{information set} for player $i$, denoted by $s_i$, is a sequence of that player's observations and actions up until that time $s_i(h) = (a_i^0, o_i^1, a_i^1, \ldots, o_i^t)$. Define the set of all information sets for player $i$ to be $\mathcal{I}_i$. 
The set of histories that correspond to an information set $s_i$ is denoted $\mathcal{H}(s_i) = \{ h: s_i(h) = s_i \}$, and it is assumed that they all share the same set of legal actions $\mathcal{A}_i(s_i(h)) = \mathcal{A}_i(h)$. For simplicity we often drop the subscript $i$ for an information set $s$ when the player is implied. 

A player's \emph{strategy} $\pi_i$ 
is a function mapping from an information set to a probability distribution over actions. A \emph{strategy profile} $\pi$ is a tuple $(\pi_1, \ldots, \pi_N)$. All players other than $i$ are denoted $-i$, and their strategies are jointly denoted $\pi_{-i}$. A strategy for a history $h$ is denoted $\pi_i(h) = \pi_i(s_i(h))$ and $\pi(h)$ is the corresponding strategy profile. 
When a strategy $\pi_i$ is learned through RL, we refer to the learned strategy as a \emph{policy}.

The \emph{expected value (EV)} $v_i^{\pi}(h)$ for player $i$ is the expected sum of future utilities for player $i$ in history $h$, when all players play strategy profile $\pi$. The EV for an information set $s_i$ is denoted $v_i^{\pi}(s_i)$ and the EV for the entire game is denoted $v_i(\pi)$. A \emph{two-player zero-sum} game has $v_1(\pi) + v_2(\pi) = 0$ for all strategy profiles $\pi$. The EV for an action in an information set is denoted $v_i^{\pi}(s_i,a_i)$. A \emph{Nash equilibrium (NE)} is a strategy profile such that, if all players played their NE strategy, no player could achieve higher EV by deviating from it. Formally, $\pi^*$ is a NE if $v_i(\pi^*) = \max_{\pi_i}v_i(\pi_i, \pi^*_{-i})$ for each player $i$.

The \emph{exploitability} $e(\pi)$ of a strategy profile $\pi$ is defined as $e(\pi) = \sum_{i \in \mathcal{N}} \max_{\pi'_i}v_i(\pi'_i, \pi_{-i})$. A \emph{best response (BR)} strategy $\mathbb{BR}_i(\pi_{-i})$ for player $i$ to a strategy $\pi_{-i}$ is a strategy that maximally exploits $\pi_{-i}$: $\mathbb{BR}_i(\pi_{-i}) = \arg\max_{\pi_i}v_i(\pi_i, \pi_{-i})$. An \emph{$\bm{\epsilon}$-best response ($\bm{\epsilon}$-BR)} strategy $\mathbb{BR}^\epsilon_i(\pi_{-i})$ for player $i$ to a strategy $\pi_{-i}$ is a strategy that is at most $\epsilon$ worse for player $i$ than the best response: $v_i(\mathbb{BR}^\epsilon_i(\pi_{-i}), \pi_{-i}) \ge v_i(\mathbb{BR}_i(\pi_{-i}), \pi_{-i}) - \epsilon$. An \emph{$\bm{\epsilon}$-Nash equilibrium ($\bm{\epsilon}$-NE)} is a strategy profile $\pi$ in which, for each player $i$, $\pi_i$ is an $\epsilon$-BR to $\pi_{-i}$. 

\subsection{Counterfactual Regret Minimization (CFR)}
In this section we review the \textit{counterfactual regret minimization (CFR)} framework. 
All superhuman poker AIs have used advanced variants of the framework as part of their architectures~\citep{bowling2015heads,brown2018superhuman,brown2019superhuman}. CFR is also the basis of several reinforcement learning algorithms described in section~\ref{sec:relwork}.
We will leverage and extend the CFR framework in the rest of the paper. We will start by reviewing the framework. 

Define $\eta^\pi(h)$ to be the reach weight of joint policy $\pi$ to reach history $h$, and $z$ is a terminal history. Define $Z$ to be the set of all terminal histories. Define $Z(s) \subseteq Z$ to be the set of terminal histories $z$ that can be reached from information state $s$ and define $z[s]$ to be the unique history $h \in s$ that is a subset of $z$.  
Define
\begin{equation}
\label{eq:history-val}
    v_{i}(\pi, h) = \sum_{z \sqsupset h}\eta^\pi(h, z)u_i(z)
\end{equation}
to be the expected value under $\pi$ for player $i$ having reached $h$. Note that this value function takes as input the full-information history $h$ and not an information set. Define 
\begin{equation}
\label{eq:cf-val}
    v_i^c(\pi, s) = \sum_{z \in Z(s)}\eta^\pi_{-i}(z[s])\eta^\pi(z[s], z)u_i(z) = \sum_{h \in s}\eta^\pi_{-i}(h)v_{i}(\pi, h)   
\end{equation}
to be the \emph{counterfactual value} for player $i$ at state $s$ under the joint strategy $\pi$.
Define the strategy $\pi_{s \rightarrow a}$ to be a modified version of $\pi$ where $a$ is played at information set $s$, and the counterfactual state-action value $q^c_i(\pi, s, a) = v^c_i(\pi_{s \rightarrow a}, s)$. 
For any state $s$, strategy $\pi$, and action $a \in \mathcal{A}(s)$, one can define a local \emph{counterfactual regret} for not switching to playing $a$ at $s$ as $r^c(\pi, s, a) = q_i^c(\pi, s, a) - v_i^c(\pi, s)$.
Counterfactual regret minimization (CFR)~\citep{cfr} is a strategy iteration algorithm that produces a sequence of policies: $\{ \pi_1, \pi_2, \cdots, \pi_T \}$. Each policy $\pi_{t+1}(s)$ is derived directly from a collection of cumulative regrets $R_T(s, a) = \sum_{t=1}^T r^c(\pi_t, s, a)$, for all $a \in \mathcal{A}(s)$
using regret-matching~\citep{hart2000simple}.
In two-player zero-sum games, the average policy $\bar{\pi}_T$ converges to an approximate Nash equilibrium at a rate of $e(\bar{\pi}_T) \le O(1/\sqrt{T})$.


\subsection{Monte Carlo Counterfactual Regret Minimization (MCCFR)}

In the standard CFR algorithm, the quantities required to produce new policies in Equations~\ref{eq:history-val} and~\ref{eq:cf-val} require full traversals of the game to compute exactly. Monte Carlo CFR~\citep{lanctot2009monte} is a \emph{stochastic} version of CFR which instead \emph{estimates} these quantities. 
In particular, MCCFR uses a sampling approach which specifies a distribution over blocks $Z_j$ of terminal histories such that $\cup_j Z_j = \mathcal{Z}$, the set of terminal histories. Upon sampling a block $j$, a certain \emph{sampled counterfactual value} $\hat{v}^c(\pi, s~|~j)$ (defined in detail later in this section) is computed for all prefix histories that occur in $Z_j$. Then, estimated regrets are accumulated and new policies derived as in CFR. 
The main result is that $\mathbb{E}[\hat{v}^c(\pi, s~|~j)] = v^c(\pi, s)$, so MCCFR is an unbiased approximation of CFR, and inherits its convergence properties albeit under a probabilistic guarantee.

Blocks are sampled via sampling policy $\tilde{\pi}$ which is commonly a function of the players' joint policy $\pi$.
Two sampling variants were defined in the original MCCFR paper: \emph{outcome sampling} (OS-MCCFR) and \emph{external sampling} (ES-MCCFR). External sampling samples only the opponent (and chance's) choices; hence, it requires a forward model of the game to recursively traverse over all of the subtrees under the player's actions. Outcome sampling is the most extreme sampling variant where blocks consist of a single terminal history: it is the only model-free variant of MCCFR compliant with the standard reinforcement learning loop where the agent learns entirely from experience with the environment. The OS-MCCFR counterfactual value estimator when the opponent samples from their current policy as is commonly done is given as follows:
\begin{equation}
    \label{mccfr_eq}
    \begin{split}
    \hat{v}_{i}(\pi, s | z) & = \frac{\eta^{\pi_{-i}}(z[s])\eta^\pi(z[s], z)u_i(z)}{\eta^{\tilde{\pi}}(z)} 
     = \frac{1}{\eta^{\tilde{\pi}_i}(z[s])}\frac{\eta^{\pi_i}(z[s], z)}{\eta^{\tilde{\pi}_i}(z[s], z)}u_i(z) \\
    \end{split}
\end{equation}

The importance sampling term that is used to satisfy the unbiasedness of the values can have a significant detrimental effect on the convergence rate~\cite{Gibson12probing}. Variance reduction techniques provide some empirical benefit~\cite{Schmid19VRMCCFR,Davis19}, but have not been evaluated on games with long trajectories where the importance corrections have their largest impact.

\subsection{Deep Counterfactual Regret Minimization}

Deep CFR~\citep{deep_cfr, steinberger2019single, li2019double} is a method that uses neural networks to scale MCCFR to large games. Deep CFR performs external sampling MCCFR and trains a regret network $R_i(s, a | \psi)$ on a replay buffer of information sets and estimated cumulative regrets. The regret network is trained to approximate the cumulative regrets seen so far at that information state. The estimated counterfactual regrets are computed the same as in MCCFR, namely using importance sampling or external sampling.

\subsection{DREAM}
DREAM~\citep{steinberger2020dream} builds on Deep CFR and approximates OS-MCCFR with deep neural networks. Like Deep CFR, it trains a regret network $R_i(s, a | \psi)$ on a replay buffer of information sets and estimated cumulative regrets. Additionally, in order to limit the high variance of OS-MCCFR, DREAM uses a learned history value function $q_i(\pi, h, a | \theta)$ and uses it as a baseline~\citep{Schmid19VRMCCFR}. While the baseline helps remove variance in the estimation of future utility, in order to remain unbiased DREAM must use importance sampling as in OS-MCCFR. We show empirically that variance of the DREAM estimator of the counterfactual value, although lower than OS-MCCFR, will often be quite high, even in small games and with an oracle history value function. This high variance estimator might make neural network training very difficult. In contrast, ESCHER has no importance sampling term and instead directly uses the learned history value function $q_i(\pi, h, a | \theta)$ to estimate regrets.

\section{Tabular ESCHER  with Oracle Value Function}\label{sec:tabular}

In this section we define a tabular version of our proposed algorithm, \emph{ESCHER}, where we assume oracle access to a value function.
While in practice this tabular algorithm will not compare well to existing approaches due to the expense of generating an oracle value function, in this section we show that if we assume access to an oracle value function at no cost, then our tabular method is sound and converges to a Nash equilibrium with high probability. In the next section we introduce our main method which is a deep version of this tabular method and learns a neural network value function from data. 

As shown in Equation \ref{mccfr_eq}, the OS-MCCFR estimator can be seen as containing two separate terms. The first $1/\eta^{\tilde{\pi}_i}(z[s])$ term ensures that each information set is updated equally often in expectation. The second $\eta^{\pi_i}(z[s], z)u_i(z)/\eta^{\tilde{\pi}_i}(z[s], z)$ term is an unbiased estimator of the history value $v_i(\pi, z[s])$. In DREAM, the second term gets updated by a bootstrapped baseline to reduce variance. But since the baseline is not perfect in practice, as we show in our ablations, this term still induces high variance, which prevents the regret network from learning effectively. The main idea behind ESCHER is to remove the first reach weighting term by ensuring that the sampling distribution for the update player remains fixed across iterations, and to replace the second term with a history value function $\hat{v}_i(\pi, z[s])$. 

Similar to OS-MCCFR, the tabular version of ESCHER iteratively updates each player's policy by sampling a single trajectory. When updating a player's strategy, we sample from a fixed distribution (for example the uniform distribution) for that player and sample from the opponent's current policy for the other player. As in MCCFR, we update the estimated regret for all actions in each information state reached in the trajectory. However, unlike in MCCFR, we do not use the terminal utility to estimate regret but instead use the oracle history value function. This reduces variance in the update because the oracle value function at a given history action pair will always be the same if the policies are the same. 

ESCHER samples from the opponent's current strategy when it is their turn but samples from a fixed strategy that roughly visits every information set equally likely when it is the update player's turn. As a result, the expected value of the history value is equal to the counterfactual value scaled by a term that weights certain information sets up or down based on the fixed sampling policy. Formally, define the fixed sampling policy $b_i(s, a)$ to be any policy that remains constant across iterations and puts positive probability on every action. This fixed sampling policy can be one of many distributions such as one that samples uniformly over available actions at every information set. An interesting open research direction is finding good fixed sampling policy. In this paper, our fixed sampling policy uniformly samples over actions, which is somewhat similar to the robust sampling technique introduced in ~\cite{li2019double}. When updating player $i$, we construct a joint fixed sampling policy $\tilde{\pi}^i(s, a)$ to be 
\begin{equation}\label{sampling_dist}
     \tilde{\pi}^i(s, a) =
\begin{cases}
    b_i(s, a) & \text{if it's the update player $i$'s turn} \\
    \pi_{-i}(s, a)              & \text{otherwise}
\end{cases}
\end{equation}
We use a fixed sampling policy because it allows us to remove any importance sampling in our estimator. Unlike OS-MCCFR which must divide by the current player's reach probability to remain unbiased, our method simply weights the regrets of certain information states more than others, but total average regret is still guaranteed to converge to zero. 

To remove the importance sampling term that arises from estimating the future expected value from the terminal utility, we substitute this estimate with the history value function for the ground-truth history in that trajectory. Since we only update regret on information states visited during the trajectory, our estimator is zero on all other information sets. Formally, we define our estimator for the counterfactual regret as follows:

\begin{equation}
    \hat{r}_i(\pi, s, a | z) = 
    \begin{cases}
    q_{i}(\pi, z[s], a) - v_i(\pi, z[s]) & \text{if } z \in Z(s) \\
    0 & \text{otherwise}
    \end{cases}
    \label{eq:rhat}
\end{equation}

If we sample from $\tilde{\pi}^i$ when updating player $i$, then the expected value of our counterfactual regret estimator is:
\begin{align} 
\mathbf{E}_{z \sim \tilde{\pi}^i}[\hat{r}_i(\pi, s, a | z)] & = 
 \sum_{z \in Z} \eta^{\tilde{\pi}^i}(z)[\hat{r}_i(\pi, s, a | z)] \nonumber\\
  & = \sum_{z \in Z(s)} \eta^{\tilde{\pi}^i}(z)[q_{i}(\pi, z[s], a) - v_{i}(\pi, z[s])] \nonumber\\
 & = \sum_{h \in s} \sum_{z \sqsupset h} \eta^{\tilde{\pi}^i}(z)[q_{i}(\pi, z[s], a) - v_{i}(\pi, z[s])] \nonumber\\
 & = \sum_{h \in s} \eta^{\tilde{\pi}^i}(h)[q_{i}(\pi, h, a) - v_{i}(\pi, h)] \nonumber\\
 & = \eta_{i}^{\tilde{\pi}^i}(s)\sum_{h \in s} \eta_{-i}^{\pi}(h)[q_{i}(\pi, h, a) - v_{i}(\pi, h)] \nonumber\\
 & = w(s)[v^c_i(\pi, s, a) - v^c_i(\pi, s)] 
 = w(s)r^c(\pi, s, a) \label{eq7}
\end{align}

\begin{algorithm}[t]
   \caption{Tabular ESCHER with Oracle Value Function}
   \label{alg:main_alg_tab}
    \DontPrintSemicolon
    \For{$t = 1, ..., T$}{
        \For{update player $i \in \{0, 1\}$}{
            Sample trajectory $\tau$ using sampling distribution $\tilde{\pi}^i$ (Equation \ref{sampling_dist})\;
            \For{each state $s \in \tau$}{
                \For{each action $a$}{
                    Estimate immediate regret vector $\hat{r}(\pi, s, a | z) = q_i(\pi, z[s], a) - v_i(\pi, z[s])$\;
                    Update total estimated regret of action $a$ at infostate $s$: $\hat{R}(s, a) = \hat{R}(s, a) + \hat{r}(\pi, s, a | z)$\;
                    Update $\pi_i(s, a)$ via regret matching on total estimated regret\;
                }
            }
       }
   }
   {\bfseries return} average policy $\bar{\pi}$\;
\end{algorithm}

\begin{algorithm}[t]
   \caption{ESCHER}
   \label{alg:main_alg}
   \DontPrintSemicolon
    Initialize history value function $q$\;
    Initialize policy $\pi_i$ for both players\;
    \For{$t = 1, ..., T$}{
        Retrain history value function $q$ on data from $\pi$\;
        Reinitialize regret networks $R_0, R_1$\;
        \For{update player $i \in \{0, 1\}$}{
            \For{$P$ trajectories}{
                Get trajectory $\tau$ using sampling distribution (Equation~\ref{sampling_dist})\;
                \For{each state $s \in \tau$}{
                    \For{each action $a$}{
                        Estimate immediate cf-regret $\hat{r}(\pi, s, a | z) = q_i(\pi, z[s], a | \theta) - \sum_a \pi_{i}(s, a) q_i(\pi, z[s], a | \theta)$\;
                    }
                    Add $(s, \hat{r}(\pi, s))$ to cumulative regret buffer\;
                    Add $(s, a')$ to average policy buffer where $a'$ is action taken at state $s$ in trajectory $\tau$\;
                }
            }
            Train regret network $R_i$ on cumulative regret buffer\;
        }
   }
   Train average policy network $\bar{\pi}_{\phi}$ on average policy buffer\;
   {\bfseries return} average policy network $\bar{\pi}_\phi$\;
\end{algorithm}

Where for $h, h' \in s$, $\eta_{i}^{\pi'}(h) = \eta_{i}^{\pi'}(h') = \eta_{i}^{\pi'}(s) =: w(s)$ is the reach probability for reaching that infostate for player $i$ via the fixed sampling distribution. 
Unlike the MCCFR estimator, our estimator has no importance sampling terms, and as a result has much lower variance.
When all information sets are visited by the sampling distribution with equal probability, then ESCHER is perfectly unbiased. 
The correctness of our method is established by the next theorem, whose proof can be found in Appendix \ref{app:proofs}. As shown in the proof, the regret of our method is bounded with high probability by a term that is inversely proportional to the minimum over information sets $s$ of $w(s)$. Therefore, our theory suggests that the balanced sampling distribution is the optimal sampling distribution, but in practice other sampling distributions might perform better. In our experiments we approximate the balanced distribution with uniform sampling over actions. 

\begin{restatable}{theorem}{thethm}
    Assume a fixed sampling policy that puts positive probability on every action.  For any $p \in (0,1)$, with probability at least $1-p$, the regret accumulated by each agent learning using the tabular algorithm ESCHER (Algorithm~\ref{alg:main_alg_tab}) is upper bounded by $O(\sqrt{T}\cdot\mathrm{poly\,log}(1/p))$, where the $O(\cdot)$ notation hides game-dependent and sampling-policy-dependent constants.
\end{restatable}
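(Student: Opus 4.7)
The plan is to combine the near-unbiasedness identity in equation (7) with a standard CFR-style regret decomposition, filling in the gap via a martingale concentration argument to control the stochastic error of the sampled estimates. The proof proceeds in four steps, keyed off two observations: (i) with an oracle value function the estimator $\hat r_i(\pi,s,a\mid z)$ is bounded (no importance-sampling blowup), and (ii) by equation (7) its conditional expectation is exactly the per-infostate counterfactual regret rescaled by the fixed weight $w(s)=\eta_i^{\tilde\pi^i}(s)>0$, which is strictly positive since the fixed sampling policy $b_i$ places positive mass on every action.

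First, I would invoke the standard regret-matching guarantee at each infostate. Let $U$ bound $|\mathcal{U}_i|$. The sampled regret $\hat r_t(s,a):=\hat r_i(\pi_t,s,a\mid z_t)$ satisfies $|\hat r_t(s,a)|\le 2U$ uniformly, since the value function is bounded. Running regret matching on the cumulative sampled regrets at each infostate $s$ therefore yields, deterministically,
\[
\hat R^T(s):=\max_{a\in\mathcal{A}(s)} \sum_{t=1}^T \hat r_t(s,a)\;\le\; 2U\sqrt{|\mathcal{A}(s)|\,T}.
\]

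Second, I would use concentration to pass from $\hat r_t(s,a)$ to the true scaled regret $w(s)\, r^c(\pi_t,s,a)$. For each fixed $(s,a)$, the sequence $X_t^{s,a}:=\hat r_t(s,a)-w(s)\,r^c(\pi_t,s,a)$ is a martingale-difference sequence with respect to the filtration generated by the sampled trajectories: equation (7) gives $\mathbb{E}[X_t^{s,a}\mid \mathcal{F}_{t-1}]=0$, and each $|X_t^{s,a}|\le 4U$. Azuma--Hoeffding yields
\[
\Pr\!\left[\Big|\sum_{t=1}^T X_t^{s,a}\Big|\ge c\,U\sqrt{T\log(1/\delta)}\right]\le \delta
\]
for a universal constant $c$. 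Taking a union bound over all infostate--action pairs (of which there are finitely many) with $\delta=p/(|\mathcal{I}_i|\cdot\max_s|\mathcal{A}(s)|)$ gives that, with probability at least $1-p$,
\[
\Big|\sum_{t=1}^T \hat r_t(s,a)-w(s)\sum_{t=1}^T r^c(\pi_t,s,a)\Big|\le C\,U\sqrt{T\log(|\mathcal{I}_i||\mathcal{A}|/p)}
\]
simultaneously for every $(s,a)$.

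Third, I would combine the two bounds. Dividing by $w(s)\ge w_{\min}:=\min_{s}w(s)>0$ and taking the max over $a$,
\[
R^T(s):=\max_a\sum_{t=1}^T r^c(\pi_t,s,a)\;\le\;\frac{1}{w_{\min}}\Big(2U\sqrt{|\mathcal{A}(s)|T}+C\,U\sqrt{T\log(|\mathcal{I}_i||\mathcal{A}|/p)}\Big).
\]
Finally, the standard CFR decomposition bounds each player's external regret by $\sum_{s\in\mathcal{I}_i} R^{T,+}(s)$, yielding
\[
R^T_i \;\le\; \frac{|\mathcal{I}_i|}{w_{\min}}\cdot O\!\left(U\sqrt{|\mathcal{A}|\,T\,\log(|\mathcal{I}_i||\mathcal{A}|/p)}\right)\;=\;O\!\left(\sqrt{T}\cdot\mathrm{poly\,log}(1/p)\right),
\]
where all dependence on $|\mathcal{I}_i|$, $|\mathcal{A}|$, $U$, and $1/w_{\min}$ is absorbed into the game- and sampling-policy-dependent constants hidden by $O(\cdot)$, as the theorem allows.

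The main obstacle is Step~2: one must justify that the conditional-expectation identity of equation (7), derived for a single iteration with a fixed current policy $\pi$, can be applied sequentially when $\pi_t$ itself is a random object determined by the past samples. The natural fix is to condition on the filtration $\mathcal{F}_{t-1}$ generated by trajectories up to round $t-1$; then $\pi_t$ is $\mathcal{F}_{t-1}$-measurable and the identity (7) applied conditionally gives $\mathbb{E}[\hat r_t(s,a)\mid \mathcal{F}_{t-1}]=w(s)\,r^c(\pi_t,s,a)$, so $X_t^{s,a}$ is genuinely a martingale-difference sequence and Azuma applies. Everything else is bookkeeping: verifying the boundedness of $\hat r_t$ (immediate from the oracle value function), the union bound over the finite set $\bigcup_s\{s\}\times\mathcal{A}(s)$, and the standard CFR decomposition theorem stated in the background.
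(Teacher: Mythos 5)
Your proposal is correct and follows essentially the same route as the paper's proof: the scaled-unbiasedness identity from equation (7), an Azuma--Hoeffding bound on the martingale difference between the sampled and true (rescaled) counterfactual regrets with union bounds over actions and infostates, the deterministic regret-matching guarantee on the estimated regrets, division by $w(s)\ge w_{\min}>0$, and the standard CFR decomposition of external regret into per-infostate counterfactual regrets. Your explicit treatment of the filtration issue (conditioning on $\mathcal F_{t-1}$ so that equation (7) applies with the random, adapted policy $\pi_t$) is a point the paper's write-up leaves implicit, but otherwise the two arguments coincide.
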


In the appendix we extend the analysis to the case of approximate history value function, and give a bound with an explicit dependence on the magnitude of the approximation error.

\section{ESCHER}

As shown in the previous section, when using an oracle value function, our method minimizes regret in the single-agent case and converges to a Nash equilibrium in the two-player case. In this section we describe our main method where we learn a history-dependent value function via a neural network. Similar to Deep CFR and DREAM, we train a regret network $R_i(s, a | \psi)$ over a buffer of information states and targets where the targets come directly from the learned history value function $q_i(\pi, h, a | \theta)$. 

Our method is built on Deep CFR. In particular, like Deep CFR, we traverse the game tree and add this experience into replay buffers. The first replay buffer stores information states and instantaneous regret estimates is used to train a regret network $R_\psi(s, a)$ that is trained to estimate the cumulative regret at a given information set. Unlike Deep CFR and DREAM, which use the terminal utility and sampling probabilities from the current trajectory to estimate the value, in ESCHER the instantaneous regret estimates are estimated using the current history value function alone  
\begin{equation}
    \hat{r}_i(\pi, s, a | z) = q_i(\pi, z[s], a | \theta) - \sum_{a'} \pi_i(s, a') q_i(\pi, z[s], a' | \theta)
\end{equation}
Similar to Deep CFR, each player's current policy $\pi_i$ is given by performing regret matching on the output of the current regret network $R_i(s, a | \psi)$. 

The second replay buffer stores histories and terminal utilities and is used to train the value network $Q_\theta$ to estimate the expected utility for both players when both players are at that history and play from their current policies. Lastly, the third replay buffer stores information states and actions taken by the policy $\pi$ and uses that data to train an average policy network $\bar{\pi}_\phi$ that approximates the average policy across all iterations. It is this average policy that has no regret and converges to an approximate Nash equilibrium in self play. 

As described in the previous section, the only difference between our tabular method and MCCFR and our deep method and Deep OS-MCCFR is the estimator for the immediate regret. While Deep OS-MCCFR uses an importance-weighted estimate of the counterfactual value estimated from the utility of the rollout, we instead simply use the value function to estimate the immediate regret. We describe our algorithm in Algorithm \ref{alg:main_alg}.

\begin{table}[t]\centering
    \setlength\tabcolsep{.6mm}
    \newcommand{\displaynumE}[3]{\small\makebox[1.5cm][r]{$(#1 \pm #2)$}\makebox[.9cm][l]{$\times 10^{#3}$}}
    \newcommand{\displaynum}[2]{\displaynumE{#1}{#2}{0}}
    \newcommand{\cbox}[2]{\fcolorbox{white}[rgb]{#1}{#2}}
    \scalebox{.97}{\begin{tabular}{p{3cm}rrrr}
\toprule
{\bf Game} & {\bf ESCHER (Ours)} & {\bf Ablation 1} & {\bf Ablation 2} & {\bf DREAM} \\ \midrule
Phantom Tic-Tac-Toe & \cbox{0.7843137254901961,0.7843137254901961,1.0}{\displaynumE{2.6}{0.1}{-1}} & \cbox{0.9686274509803922,0.8679738562091504,0.8679738562091504}{\displaynumE{4.1}{0.7}{1}} & \cbox{1.0,0.7843137254901961,0.7843137254901961}{\displaynumE{1.4}{0.4}{7}} & \cbox{1.0,0.7843137254901961,0.7843137254901961}{\displaynumE{4.6}{1.0}{7}} \\
         Dark Hex 4 & \cbox{0.7843137254901961,0.7843137254901961,1.0}{\displaynumE{1.8}{0.1}{-1}} &  \cbox{0.994002306805075,0.8003075740099961,0.8003075740099961}{\displaynumE{1.3}{0.9}{2}} & \cbox{1.0,0.7843137254901961,0.7843137254901961}{\displaynumE{3.1}{1.7}{8}} & \cbox{1.0,0.7843137254901961,0.7843137254901961}{\displaynumE{2.8}{2.0}{8}} \\
         Dark Hex 5 & \cbox{0.7843137254901961,0.7843137254901961,1.0}{\displaynumE{1.3}{0.1}{-1}} &                \cbox{1.0,0.7843137254901961,0.7843137254901961}{\displaynumE{3.3}{1.6}{2}} & \cbox{1.0,0.7843137254901961,0.7843137254901961}{\displaynumE{2.0}{0.6}{5}} & \cbox{1.0,0.7843137254901961,0.7843137254901961}{\displaynumE{5.3}{3.9}{8}} \\
\bottomrule
\end{tabular}
}\\[2mm]
    \scalebox{.97}{\begin{tabular}{p{3cm}rrrr}
\toprule
{\bf Game} & {\bf ESCHER (Ours)} & {\bf Ablation 2} & {\bf DREAM} & {\bf OS-MCCFR} \\ \midrule
      Leduc &      \cbox{0.7843137254901961,0.7843137254901961,1.0}{\displaynum{5.3}{0.0}} & \cbox{0.9529411764705882,0.9098039215686274,0.9098039215686274}{\displaynumE{3.3}{0.7}{2}} & \cbox{0.9501730103806229,0.9171856978085352,0.9171856978085352}{\displaynumE{2.8}{0.0}{2}} & \cbox{0.9847750865051903,0.8249134948096886,0.8249134948096886}{\displaynumE{2.2}{0.0}{3}} \\
 Battleship &      \cbox{0.7843137254901961,0.7843137254901961,1.0}{\displaynum{1.4}{0.0}} &   \cbox{0.9889273356401385,0.813840830449827,0.813840830449827}{\displaynumE{7.1}{0.3}{2}} & \cbox{0.9976931949250288,0.7904652056901191,0.7904652056901191}{\displaynumE{1.2}{0.0}{3}} &                \cbox{1.0,0.7843137254901961,0.7843137254901961}{\displaynumE{2.4}{0.0}{3}} \\
Liar's Dice & \cbox{0.7843137254901961,0.7843137254901961,1.0}{\displaynumE{9.0}{0.1}{-1}} &   \cbox{0.9584775086505191,0.895040369088812,0.895040369088812}{\displaynumE{7.8}{0.9}{1}} &  \cbox{0.986159169550173,0.8212226066897347,0.8212226066897347}{\displaynumE{4.0}{0.8}{2}} &                \cbox{1.0,0.7843137254901961,0.7843137254901961}{\displaynumE{1.2}{0.1}{3}} \\
\bottomrule
\end{tabular}
}
    \caption{These results track the average variance of the regret estimator of each algorithm over all iterations. The top table shows results of deep algorithms on large games. Ablation 1 is ESCHER but with a bootstrapped baseline, and ablation 2 is ESCHER but with reach weighting. The bottom table shows tabular versions of the algorithms with oracle value functions on small games. Because ESCHER does not use importance sampling, the variance of its estimator is orders of magnitude smaller than baselines. Colors scale with the ratio of the minimum value in each row, according to the logarithmic color scale \!\!\raisebox{-4.5mm}{\includegraphics[scale=.65]{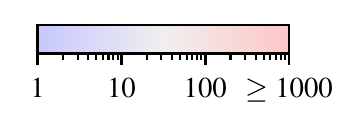}}\!\!\!\!.}
    \label{table:variance}
\end{table}

\section{Results}


\begin{figure}[t]
    \centering
    \includegraphics[width=\textwidth]{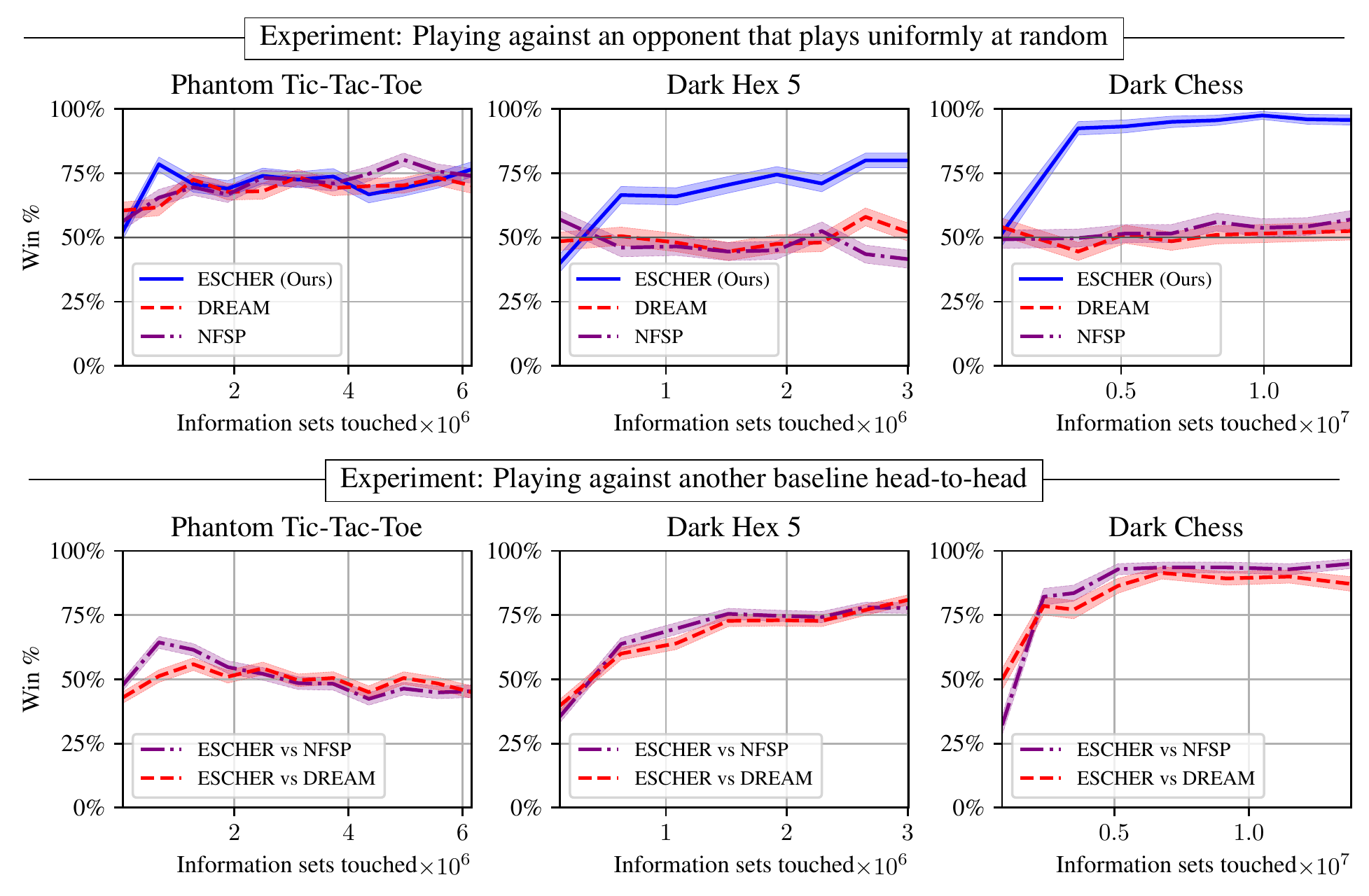}
    \caption{ESCHER is competitive with NFSP and DREAM in Phantom Tic-Tac-Toe. But as the size of the game increases, ESCHER performs increasingly better than both DREAM and NFSP. In Dark Chess, ESCHER beats DREAM and NFSP in over $90\%$ of the matches.}
    \label{fig:deep_results}
\end{figure}

\begin{table}[t]\centering
    \newcommand{\Yes}{\ding{51}}%
    \newcommand{\No}{\textcolor{black!30}{\ding{55}}}%
    \begin{tabular}{lccc}
        \toprule
        \multirow{2}{*}{\bf Algorithm} & \bf History value & \bf Boostrapped & \bf Importance\\ 
        &\bf function & \bf baseline & \bf sampling \\
        \midrule
        ESCHER (Ours) & \Yes & \No  & \No\\
        Ablation 1    & \Yes & \Yes & \No\\
        Ablation 2    & \Yes & \No  & \Yes\\
        \midrule
        DREAM / VR-MCCFR & \Yes & \Yes & \Yes\\
        Deep CFR / OS-MCCFR & \No & \No & \Yes\\
        \bottomrule
    \end{tabular}
    \caption{ESCHER is different from DREAM in two important ways. First, ESCHER does not use importance sampling. Second, ESCHER does not estimate counterfactual values for sampled actions via a bootstrapped baseline. Our ablations show that both improvements are necessary.}
    \label{table:ablations}
\end{table}

\begin{figure}[t]
    \centering
    \includegraphics[width=\textwidth]{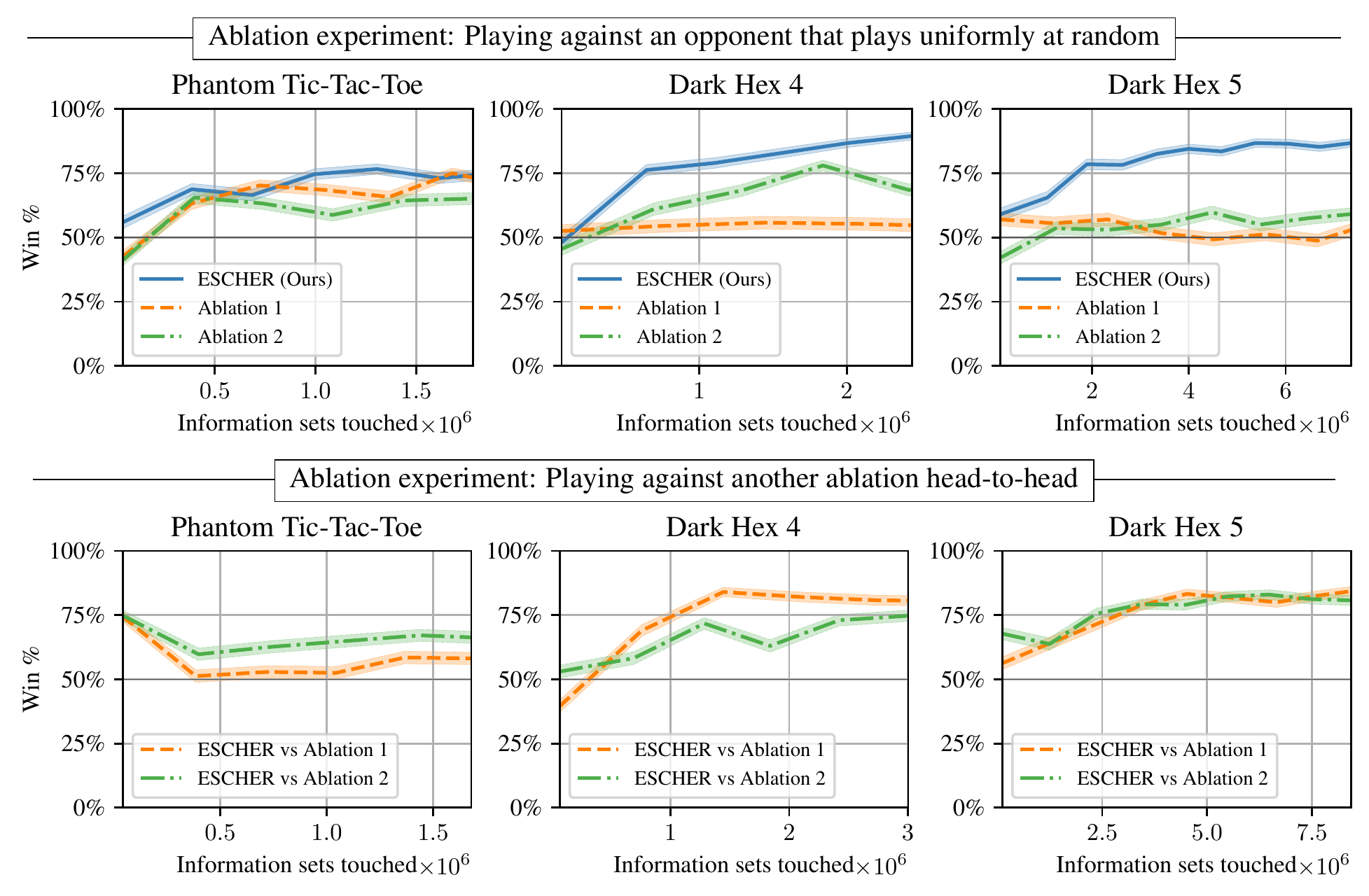}
    \caption{
    Ablation study on ESCHER. As summarized in \cref{table:ablations},
    ``Ablation 1" is ESCHER but with a bootstrapped history-value baseline, while ``Ablation 2" is ESCHER but with reach weighting. Since ESCHER with a bootstrapped history-value baseline and reach weighting is equivalent to DREAM, these results show that both changes to DREAM are necessary for ESCHER to work in large games.}
    \label{fig:ablations}
\end{figure}

    

We compare the variance of the counterfactual value estimates from ESCHER, DREAM, and ablations in Table \ref{table:variance}. Variance is computed over the set of all counterfactual regret values estimated in a single iteration. The results in the table are the average of each iteration's variance over the first five iterations of training. The top table shows results of deep algorithms on large games. A summary of the different ablations is given in Table \ref{table:ablations}. From this experiment we can see that because ESCHER does not use importance sampling, the variance of its estimator is orders of magnitude smaller than baselines. Also, we see that even ablation 1, which does not use importance sampling, has high variance. This is because when the history value function is not exact, the bootstrapping method recursively divides by the sampling probability on sampled actions. We see that this higher variance indeed leads to worse performance for DREAM and these ablations in Figure \ref{fig:ablations}. Therefore, both improvements of ESCHER over DREAM are necessary. 

We compare our method to DREAM and NFSP, the most popular baselines that are also open-source, on the games of Phantom Tic Tac Toe (TTT), Dark Hex, and Dark Chess. Dark chess is a popular game among humans under the name \emph{Fog of War Chess} on the website \verb|chess.com|, and has emerged as a benchmark task~\citep{zhang2021subgame}. All of these games are similar in that they are both imperfect information versions of perfect-information games played on square boards. Phantom TTT is played on a $3\times 3$ board while dark hex 4 is played on a size $4\times 4$ board and dark hex 5 is played on a size $5\times 5$ board. 
Because these games are large, we are not able to compare exact exploitability so instead we compare performance through head-to-head evaluation. Results are shown in Figure \ref{fig:deep_results}, where the x axis tracks the number of information sets visited during training. We see that our method is competitive with DREAM and NFSP on Phantom TTT. On the larger game of Dark Hex 5, ESCHER beats DREAM and NFSP head to head and also scores higher against a random opponent. Moving to the largest game of Dark Chess, we see that ESCHER beats DREAM and NFSP head to head over $90\%$ of the time and also is able to beat a random opponent while DREAM and NFSP are no better than random. 

\section{Discussion: Limitations and Future Research}
Our method has a number of ways it can be improved. First, it requires two separate updates in one iteration. Perhaps the method could be more efficient if only on-policy data were used. Second, our method, like Deep CFR, DREAM, and NFSP, trains neural networks on large replay buffers of past experience. Unlike RL algorithms like DQN~\citep{mnih2015human}, these replay buffers must record all data ever seen in order to learn an average. This can be a problem when the amount of data required is much larger than the replay buffer memory. Third, we do not use various implementation details that help performance in Deep CFR such as weighting by the sum of the reach probabilities over all iterations. 
Finally, our method uses separate data to train the value function. Our method could be made much more efficient by also using the data generated for training the policy to also train the value function. 

One direction of future research is finding optimal sampling distributions. In our method we use the uniform distribution over actions as our fixed sampling distribution, but this can be far from optimal. In principle any distribution that remains fixed will guarantee the method to converge with high probability. One possible direction would be to try to estimate the theoretically optimal balanced distribution. 
Other, less principled, methods such as using the average policy might work well in practice as well~\citep{burch2012efficient}. Another direction is in connecting this work with the reinforcement learning literature. Similar to reinforcement learning, we learn a Q value and a policy, and there are many techniques from reinforcement learning that are promising to try in this setting. For example, although we learned the value function simply through Monte-Carlo rollouts, one could use bootstrapping-based methods such as TD-$\lambda$~\citep{sutton1988learning} and expected SARSA~\citep{rummery1994line}. The policy might be able to be learned via some sort of policy gradient, similar to QPG~\citep{srinivasan2018actor}, NeuRD~\citep{hennes2020neural}, and F-FoReL~\citep{perolat2021poincare}. 

\bibliography{main.bib}

\begin{thebibliography}{69}
\providecommand{\natexlab}[1]{#1}
\providecommand{\url}[1]{\texttt{#1}}
\expandafter\ifx\csname urlstyle\endcsname\relax
  \providecommand{\doi}[1]{doi: #1}\else
  \providecommand{\doi}{doi: \begingroup \urlstyle{rm}\Url}\fi

\bibitem[Bowling et~al.(2015)Bowling, Burch, Johanson, and
  Tammelin]{bowling2015heads}
Bowling, M., Burch, N., Johanson, M., and Tammelin, O.
\newblock Heads-up limit hold’em poker is solved.
\newblock \emph{Science}, 347\penalty0 (6218):\penalty0 145--149, 2015.

\bibitem[Brafman \& Tennenholtz(2002)Brafman and Tennenholtz]{brafman2002r}
Brafman, R.~I. and Tennenholtz, M.
\newblock R-max-a general polynomial time algorithm for near-optimal
  reinforcement learning.
\newblock \emph{Journal of Machine Learning Research}, 3\penalty0
  (Oct):\penalty0 213--231, 2002.

\bibitem[Brown \& Sandholm(2017{\natexlab{a}})Brown and
  Sandholm]{brown2017libratus}
Brown, N. and Sandholm, T.
\newblock Libratus: The superhuman {AI} for no-limit poker.
\newblock In \emph{IJCAI}, pp.\  5226--5228, 2017{\natexlab{a}}.

\bibitem[Brown \& Sandholm(2017{\natexlab{b}})Brown and
  Sandholm]{brown2017safe}
Brown, N. and Sandholm, T.
\newblock Safe and nested subgame solving for imperfect-information games.
\newblock \emph{Advances in neural information processing systems}, 30,
  2017{\natexlab{b}}.

\bibitem[Brown \& Sandholm(2018)Brown and Sandholm]{brown2018superhuman}
Brown, N. and Sandholm, T.
\newblock Superhuman {AI} for heads-up no-limit poker: Libratus beats top
  professionals.
\newblock \emph{Science}, 359\penalty0 (6374):\penalty0 418--424, 2018.

\bibitem[Brown \& Sandholm(2019)Brown and Sandholm]{brown2019superhuman}
Brown, N. and Sandholm, T.
\newblock Superhuman {AI} for multiplayer poker.
\newblock \emph{Science}, 365\penalty0 (6456):\penalty0 885--890, 2019.

\bibitem[Brown et~al.(2018)Brown, Sandholm, and Amos]{brown2018depth}
Brown, N., Sandholm, T., and Amos, B.
\newblock Depth-limited solving for imperfect-information games.
\newblock \emph{Advances in neural information processing systems}, 31, 2018.

\bibitem[Brown et~al.(2019)Brown, Lerer, Gross, and Sandholm]{deep_cfr}
Brown, N., Lerer, A., Gross, S., and Sandholm, T.
\newblock Deep counterfactual regret minimization.
\newblock In \emph{International Conference on Machine Learning}, pp.\
  793--802, 2019.

\bibitem[Brown et~al.(2020)Brown, Bakhtin, Lerer, and Gong]{brown2020combining}
Brown, N., Bakhtin, A., Lerer, A., and Gong, Q.
\newblock Combining deep reinforcement learning and search for
  imperfect-information games.
\newblock \emph{Advances in Neural Information Processing Systems},
  33:\penalty0 17057--17069, 2020.

\bibitem[Burch et~al.(2012)Burch, Lanctot, Szafron, and
  Gibson]{burch2012efficient}
Burch, N., Lanctot, M., Szafron, D., and Gibson, R.
\newblock Efficient monte carlo counterfactual regret minimization in games
  with many player actions.
\newblock \emph{Advances in neural information processing systems}, 25, 2012.

\bibitem[Burch et~al.(2014)Burch, Johanson, and Bowling]{burch2014solving}
Burch, N., Johanson, M., and Bowling, M.
\newblock Solving imperfect information games using decomposition.
\newblock In \emph{Twenty-eighth AAAI conference on artificial intelligence},
  2014.

\bibitem[Daskalakis et~al.(2020)Daskalakis, Foster, and
  Golowich]{daskalakis2020independent}
Daskalakis, C., Foster, D.~J., and Golowich, N.
\newblock Independent policy gradient methods for competitive reinforcement
  learning.
\newblock \emph{Advances in neural information processing systems},
  33:\penalty0 5527--5540, 2020.

\bibitem[Davis et~al.(2019)Davis, Schmid, and Bowling]{Davis19}
Davis, T., Schmid, M., and Bowling, M.
\newblock Low-variance and zero-variance baselines for extensive-form games.
\newblock \emph{CoRR}, abs/1907.09633, 2019.
\newblock URL \url{http://arxiv.org/abs/1907.09633}.

\bibitem[Ding et~al.(2022)Ding, Wei, Zhang, and
  Jovanovi{\'c}]{ding2022independent}
Ding, D., Wei, C.-Y., Zhang, K., and Jovanovi{\'c}, M.~R.
\newblock Independent policy gradient for large-scale markov potential games:
  Sharper rates, function approximation, and game-agnostic convergence.
\newblock \emph{arXiv preprint arXiv:2202.04129}, 2022.

\bibitem[Farina et~al.(2019{\natexlab{a}})Farina, Kroer, and
  Sandholm]{farina2019online}
Farina, G., Kroer, C., and Sandholm, T.
\newblock Online convex optimization for sequential decision processes and
  extensive-form games.
\newblock In \emph{AAAI Conference on Artificial Intelligence},
  2019{\natexlab{a}}.

\bibitem[Farina et~al.(2019{\natexlab{b}})Farina, Kroer, and
  Sandholm]{farina2019regret}
Farina, G., Kroer, C., and Sandholm, T.
\newblock Regret circuits: Composability of regret minimizers.
\newblock In \emph{International Conference on Machine Learning},
  2019{\natexlab{b}}.

\bibitem[Farina et~al.(2020)Farina, Kroer, and Sandholm]{farina2020stochastic}
Farina, G., Kroer, C., and Sandholm, T.
\newblock Stochastic regret minimization in extensive-form games.
\newblock In \emph{International Conference on Machine Learning}, 2020.

\bibitem[Feng et~al.(2021)Feng, Slumbers, Yang, Wan, Liu, McAleer, Wen, and
  Wang]{feng2021discovering}
Feng, X., Slumbers, O., Yang, Y., Wan, Z., Liu, B., McAleer, S., Wen, Y., and
  Wang, J.
\newblock Discovering multi-agent auto-curricula in two-player zero-sum games.
\newblock \emph{Advances in Neural Information Processing Systems (NeurIPS)},
  2021.

\bibitem[Fox et~al.(2022)Fox, Mcaleer, Overman, and
  Panageas]{fox2022independent}
Fox, R., Mcaleer, S.~M., Overman, W., and Panageas, I.
\newblock Independent natural policy gradient always converges in markov
  potential games.
\newblock In \emph{International Conference on Artificial Intelligence and
  Statistics}, pp.\  4414--4425. PMLR, 2022.

\bibitem[Fu et~al.(2022)Fu, Liu, Wu, Wang, Yang, Li, Xing, Li, Ma, Fu, and
  Wei]{ach}
Fu, H., Liu, W., Wu, S., Wang, Y., Yang, T., Li, K., Xing, J., Li, B., Ma, B.,
  Fu, Q., and Wei, Y.
\newblock Actor-critic policy optimization in a large-scale
  imperfect-information game.
\newblock In \emph{Proceedings of the Tenth International Conference on
  Learning Representations (ICLR)}, 2022.

\bibitem[Gibson et~al.(2012)Gibson, Lanctot, Burch, Szafron, and
  Bowling]{Gibson12probing}
Gibson, R., Lanctot, M., Burch, N., Szafron, D., and Bowling, M.
\newblock Generalized sampling and variance in counterfactual regret
  minimization.
\newblock In \emph{Proceedings of the Twenty-Sixth Conference on Artificial
  Intelligence (AAAI-12).}, pp.\  1355--1361, 2012.

\bibitem[Gray et~al.(2020)Gray, Lerer, Bakhtin, and Brown]{gray2020human}
Gray, J., Lerer, A., Bakhtin, A., and Brown, N.
\newblock Human-level performance in no-press diplomacy via equilibrium search.
\newblock In \emph{International Conference on Learning Representations}, 2020.

\bibitem[Gruslys et~al.(2020)Gruslys, Lanctot, Munos, Timbers, Schmid, Perolat,
  Morrill, Zambaldi, Lespiau, Schultz, et~al.]{gruslys2020advantage}
Gruslys, A., Lanctot, M., Munos, R., Timbers, F., Schmid, M., Perolat, J.,
  Morrill, D., Zambaldi, V., Lespiau, J.-B., Schultz, J., et~al.
\newblock The advantage regret-matching actor-critic.
\newblock \emph{arXiv preprint arXiv:2008.12234}, 2020.

\bibitem[Hansen et~al.(2004)Hansen, Bernstein, and
  Zilberstein]{hansen2004dynamic}
Hansen, E.~A., Bernstein, D.~S., and Zilberstein, S.
\newblock Dynamic programming for partially observable stochastic games.
\newblock \emph{Conference on Artificial Intelligence (AAAI)}, 2004.

\bibitem[Hart \& Mas-Colell(2000)Hart and Mas-Colell]{hart2000simple}
Hart, S. and Mas-Colell, A.
\newblock A simple adaptive procedure leading to correlated equilibrium.
\newblock \emph{Econometrica}, 68\penalty0 (5):\penalty0 1127--1150, 2000.

\bibitem[Heinrich \& Silver(2016)Heinrich and Silver]{nfsp}
Heinrich, J. and Silver, D.
\newblock Deep reinforcement learning from self-play in imperfect-information
  games.
\newblock \emph{arXiv preprint arXiv:1603.01121}, 2016.

\bibitem[Hennes et~al.(2020)Hennes, Morrill, Omidshafiei, Munos, Perolat,
  Lanctot, Gruslys, Lespiau, Parmas, Du{\'e}{\~n}ez-Guzm{\'a}n,
  et~al.]{hennes2020neural}
Hennes, D., Morrill, D., Omidshafiei, S., Munos, R., Perolat, J., Lanctot, M.,
  Gruslys, A., Lespiau, J.-B., Parmas, P., Du{\'e}{\~n}ez-Guzm{\'a}n, E.,
  et~al.
\newblock Neural replicator dynamics: Multiagent learning via hedging policy
  gradients.
\newblock In \emph{Proceedings of the 19th International Conference on
  Autonomous Agents and MultiAgent Systems}, pp.\  492--501, 2020.

\bibitem[Jin et~al.(2021)Jin, Liu, Wang, and Yu]{jin2021v}
Jin, C., Liu, Q., Wang, Y., and Yu, T.
\newblock V-learning--a simple, efficient, decentralized algorithm for
  multiagent rl.
\newblock \emph{arXiv preprint arXiv:2110.14555}, 2021.

\bibitem[Kova{\v{r}}{\'\i}k et~al.(2022)Kova{\v{r}}{\'\i}k, Schmid, Burch,
  Bowling, and Lis{\`y}]{kovavrik2022rethinking}
Kova{\v{r}}{\'\i}k, V., Schmid, M., Burch, N., Bowling, M., and Lis{\`y}, V.
\newblock Rethinking formal models of partially observable multiagent decision
  making.
\newblock \emph{Artificial Intelligence}, 303:\penalty0 103645, 2022.

\bibitem[Lanctot et~al.(2009)Lanctot, Waugh, Zinkevich, and
  Bowling]{lanctot2009monte}
Lanctot, M., Waugh, K., Zinkevich, M., and Bowling, M.
\newblock Monte carlo sampling for regret minimization in extensive games.
\newblock In \emph{Advances in neural information processing systems}, pp.\
  1078--1086, 2009.

\bibitem[Lanctot et~al.(2017)Lanctot, Zambaldi, Gruslys, Lazaridou, Tuyls,
  P{\'e}rolat, Silver, and Graepel]{psro}
Lanctot, M., Zambaldi, V., Gruslys, A., Lazaridou, A., Tuyls, K., P{\'e}rolat,
  J., Silver, D., and Graepel, T.
\newblock A unified game-theoretic approach to multiagent reinforcement
  learning.
\newblock In \emph{Advances in Neural Information Processing Systems
  (NeurIPS)}, 2017.

\bibitem[Lanctot et~al.(2019)Lanctot, Lockhart, Lespiau, Zambaldi, Upadhyay,
  P{\'e}rolat, Srinivasan, Timbers, Tuyls, Omidshafiei,
  et~al.]{lanctot2019openspiel}
Lanctot, M., Lockhart, E., Lespiau, J.-B., Zambaldi, V., Upadhyay, S.,
  P{\'e}rolat, J., Srinivasan, S., Timbers, F., Tuyls, K., Omidshafiei, S.,
  et~al.
\newblock Openspiel: A framework for reinforcement learning in games.
\newblock \emph{arXiv preprint arXiv:1908.09453}, 2019.

\bibitem[Leonardos et~al.(2021)Leonardos, Overman, Panageas, and
  Piliouras]{leonardos2021global}
Leonardos, S., Overman, W., Panageas, I., and Piliouras, G.
\newblock Global convergence of multi-agent policy gradient in markov potential
  games.
\newblock \emph{arXiv preprint arXiv:2106.01969}, 2021.

\bibitem[Li et~al.(2019)Li, Hu, Zhang, Qi, and Song]{li2019double}
Li, H., Hu, K., Zhang, S., Qi, Y., and Song, L.
\newblock Double neural counterfactual regret minimization.
\newblock In \emph{International Conference on Learning Representations}, 2019.

\bibitem[Li et~al.(2020)Li, Koyamada, Ye, Liu, Wang, Yang, Zhao, Qin, Liu, and
  Hon]{li2020suphx}
Li, J., Koyamada, S., Ye, Q., Liu, G., Wang, C., Yang, R., Zhao, L., Qin, T.,
  Liu, T.-Y., and Hon, H.-W.
\newblock Suphx: Mastering mahjong with deep reinforcement learning.
\newblock \emph{arXiv preprint arXiv:2003.13590}, 2020.

\bibitem[Liang et~al.(2018)Liang, Liaw, Nishihara, Moritz, Fox, Goldberg,
  Gonzalez, Jordan, and Stoica]{rllib}
Liang, E., Liaw, R., Nishihara, R., Moritz, P., Fox, R., Goldberg, K.,
  Gonzalez, J., Jordan, M., and Stoica, I.
\newblock Rllib: Abstractions for distributed reinforcement learning.
\newblock In \emph{International Conference on Machine Learning}, pp.\
  3053--3062, 2018.

\bibitem[Liu et~al.(2022)Liu, Li, and Togelius]{liu2022model}
Liu, W., Li, B., and Togelius, J.
\newblock Model-free neural counterfactual regret minimization with bootstrap
  learning.
\newblock \emph{IEEE Transactions on Games}, 2022.

\bibitem[McAleer et~al.(2020)McAleer, Lanier, Fox, and
  Baldi]{mcaleer2020pipeline}
McAleer, S., Lanier, J., Fox, R., and Baldi, P.
\newblock Pipeline {PSRO}: A scalable approach for finding approximate {Nash}
  equilibria in large games.
\newblock In \emph{Advances in Neural Information Processing Systems}, 2020.

\bibitem[McAleer et~al.(2021)McAleer, Lanier, Baldi, and Fox]{mcaleer2021xdo}
McAleer, S., Lanier, J., Baldi, P., and Fox, R.
\newblock {XDO}: A double oracle algorithm for extensive-form games.
\newblock \emph{Advances in Neural Information Processing Systems (NeurIPS)},
  2021.

\bibitem[McAleer et~al.(2022)McAleer, Wang, Lanctot, Lanier, Baldi, and
  Fox]{mcaleer2022anytime}
McAleer, S., Wang, K., Lanctot, M., Lanier, J., Baldi, P., and Fox, R.
\newblock Anytime optimal psro for two-player zero-sum games.
\newblock \emph{arXiv preprint arXiv:2201.07700}, 2022.

\bibitem[Mguni et~al.(2021)Mguni, Wu, Du, Yang, Wang, Li, Wen, Jennings, and
  Wang]{mguni2021learning}
Mguni, D.~H., Wu, Y., Du, Y., Yang, Y., Wang, Z., Li, M., Wen, Y., Jennings,
  J., and Wang, J.
\newblock Learning in nonzero-sum stochastic games with potentials.
\newblock In \emph{International Conference on Machine Learning}, pp.\
  7688--7699. PMLR, 2021.

\bibitem[Mnih et~al.(2015)Mnih, Kavukcuoglu, Silver, Rusu, Veness, Bellemare,
  Graves, Riedmiller, Fidjeland, Ostrovski, et~al.]{mnih2015human}
Mnih, V., Kavukcuoglu, K., Silver, D., Rusu, A.~A., Veness, J., Bellemare,
  M.~G., Graves, A., Riedmiller, M., Fidjeland, A.~K., Ostrovski, G., et~al.
\newblock Human-level control through deep reinforcement learning.
\newblock \emph{Nature}, 518\penalty0 (7540):\penalty0 529--533, 2015.

\bibitem[Moravcik et~al.(2016)Moravcik, Schmid, Ha, Hladik, and
  Gaukrodger]{moravcik2016refining}
Moravcik, M., Schmid, M., Ha, K., Hladik, M., and Gaukrodger, S.
\newblock Refining subgames in large imperfect information games.
\newblock In \emph{Proceedings of the AAAI Conference on Artificial
  Intelligence}, volume~30, 2016.

\bibitem[Morav{\v{c}}{\'\i}k et~al.(2017)Morav{\v{c}}{\'\i}k, Schmid, Burch,
  Lis{\`y}, Morrill, Bard, Davis, Waugh, Johanson, and
  Bowling]{moravvcik2017deepstack}
Morav{\v{c}}{\'\i}k, M., Schmid, M., Burch, N., Lis{\`y}, V., Morrill, D.,
  Bard, N., Davis, T., Waugh, K., Johanson, M., and Bowling, M.
\newblock Deepstack: Expert-level artificial intelligence in heads-up no-limit
  poker.
\newblock \emph{Science}, 356\penalty0 (6337):\penalty0 508--513, 2017.

\bibitem[Moritz et~al.(2018)Moritz, Nishihara, Wang, Tumanov, Liaw, Liang,
  Elibol, Yang, Paul, Jordan, et~al.]{ray}
Moritz, P., Nishihara, R., Wang, S., Tumanov, A., Liaw, R., Liang, E., Elibol,
  M., Yang, Z., Paul, W., Jordan, M.~I., et~al.
\newblock Ray: A distributed framework for emerging $\{$AI$\}$ applications.
\newblock In \emph{13th $\{$USENIX$\}$ Symposium on Operating Systems Design
  and Implementation ($\{$OSDI$\}$ 18)}, pp.\  561--577, 2018.

\bibitem[Muller et~al.(2019)Muller, Omidshafiei, Rowland, Tuyls, Perolat, Liu,
  Hennes, Marris, Lanctot, Hughes, et~al.]{muller2019generalized}
Muller, P., Omidshafiei, S., Rowland, M., Tuyls, K., Perolat, J., Liu, S.,
  Hennes, D., Marris, L., Lanctot, M., Hughes, E., et~al.
\newblock A generalized training approach for multiagent learning.
\newblock In \emph{International Conference on Learning Representations}, 2019.

\bibitem[Osborne \& Rubinstein(1994)Osborne and
  Rubinstein]{OsborneRubinstein94}
Osborne, M.~J. and Rubinstein, A.
\newblock \emph{A Course in Game Theory}.
\newblock MIT Press, 1994.

\bibitem[Perolat et~al.(2018)Perolat, Piot, and Pietquin]{perolat2018actor}
Perolat, J., Piot, B., and Pietquin, O.
\newblock Actor-critic fictitious play in simultaneous move multistage games.
\newblock In \emph{International Conference on Artificial Intelligence and
  Statistics}, pp.\  919--928. PMLR, 2018.

\bibitem[Perolat et~al.(2021)Perolat, Munos, Lespiau, Omidshafiei, Rowland,
  Ortega, Burch, Anthony, Balduzzi, De~Vylder, et~al.]{perolat2021poincare}
Perolat, J., Munos, R., Lespiau, J.-B., Omidshafiei, S., Rowland, M., Ortega,
  P., Burch, N., Anthony, T., Balduzzi, D., De~Vylder, B., et~al.
\newblock From {P}oincar{\'e} recurrence to convergence in imperfect
  information games: Finding equilibrium via regularization.
\newblock In \emph{International Conference on Machine Learning}, pp.\
  8525--8535. PMLR, 2021.

\bibitem[Perolat et~al.(2022)Perolat, de~Vylder, Hennes, Tarassov, Strub,
  de~Boer, Muller, Connor, Burch, Anthony, et~al.]{perolat2022mastering}
Perolat, J., de~Vylder, B., Hennes, D., Tarassov, E., Strub, F., de~Boer, V.,
  Muller, P., Connor, J.~T., Burch, N., Anthony, T., et~al.
\newblock Mastering the game of stratego with model-free multiagent
  reinforcement learning.
\newblock \emph{arXiv preprint arXiv:2206.15378}, 2022.

\bibitem[Pinto et~al.(2017)Pinto, Davidson, Sukthankar, and
  Gupta]{pinto2017robust}
Pinto, L., Davidson, J., Sukthankar, R., and Gupta, A.
\newblock Robust adversarial reinforcement learning.
\newblock In \emph{International Conference on Machine Learning}, pp.\
  2817--2826. PMLR, 2017.

\bibitem[Rummery \& Niranjan(1994)Rummery and Niranjan]{rummery1994line}
Rummery, G.~A. and Niranjan, M.
\newblock \emph{On-line Q-learning using connectionist systems}, volume~37.
\newblock Citeseer, 1994.

\bibitem[Schmid et~al.(2019)Schmid, Burch, Lanctot, Moravcik, Kadlec, and
  Bowling]{Schmid19VRMCCFR}
Schmid, M., Burch, N., Lanctot, M., Moravcik, M., Kadlec, R., and Bowling, M.
\newblock Variance reduction in monte carlo counterfactual regret minimization
  ({VR-MCCFR}) for extensive form games using baselines.
\newblock In \emph{Proceedings of the The Thirty-Third AAAI Conference on
  Artificial Intelligence}, 2019.

\bibitem[Schmid et~al.(2021)Schmid, Moravcik, Burch, Kadlec, Davidson, Waugh,
  Bard, Timbers, Lanctot, Holland, et~al.]{schmid2021player}
Schmid, M., Moravcik, M., Burch, N., Kadlec, R., Davidson, J., Waugh, K., Bard,
  N., Timbers, F., Lanctot, M., Holland, Z., et~al.
\newblock Player of games.
\newblock \emph{arXiv preprint arXiv:2112.03178}, 2021.

\bibitem[Serrino et~al.(2019)Serrino, Kleiman-Weiner, Parkes, and
  Tenenbaum]{serrino2019finding}
Serrino, J., Kleiman-Weiner, M., Parkes, D.~C., and Tenenbaum, J.
\newblock Finding friend and foe in multi-agent games.
\newblock \emph{Advances in Neural Information Processing Systems}, 32, 2019.

\bibitem[Silver et~al.(2017)Silver, Schrittwieser, Simonyan, Antonoglou, Huang,
  Guez, Hubert, Baker, Lai, Bolton, et~al.]{silver2017mastering}
Silver, D., Schrittwieser, J., Simonyan, K., Antonoglou, I., Huang, A., Guez,
  A., Hubert, T., Baker, L., Lai, M., Bolton, A., et~al.
\newblock Mastering the game of go without human knowledge.
\newblock \emph{nature}, 550\penalty0 (7676):\penalty0 354--359, 2017.

\bibitem[Srinivasan et~al.(2018)Srinivasan, Lanctot, Zambaldi, P{\'e}rolat,
  Tuyls, Munos, and Bowling]{srinivasan2018actor}
Srinivasan, S., Lanctot, M., Zambaldi, V., P{\'e}rolat, J., Tuyls, K., Munos,
  R., and Bowling, M.
\newblock Actor-critic policy optimization in partially observable multiagent
  environments.
\newblock \emph{Advances in neural information processing systems}, 31, 2018.

\bibitem[Steinberger(2019)]{steinberger2019single}
Steinberger, E.
\newblock Single deep counterfactual regret minimization.
\newblock \emph{arXiv preprint arXiv:1901.07621}, 2019.

\bibitem[Steinberger et~al.(2020)Steinberger, Lerer, and
  Brown]{steinberger2020dream}
Steinberger, E., Lerer, A., and Brown, N.
\newblock {DREAM}: Deep regret minimization with advantage baselines and
  model-free learning.
\newblock \emph{arXiv preprint arXiv:2006.10410}, 2020.

\bibitem[Sutton(1988)]{sutton1988learning}
Sutton, R.~S.
\newblock Learning to predict by the methods of temporal differences.
\newblock \emph{Machine learning}, 3\penalty0 (1):\penalty0 9--44, 1988.

\bibitem[Vinyals et~al.(2019)Vinyals, Babuschkin, Czarnecki, Mathieu, Dudzik,
  Chung, Choi, Powell, Ewalds, Georgiev, et~al.]{alphastar}
Vinyals, O., Babuschkin, I., Czarnecki, W.~M., Mathieu, M., Dudzik, A., Chung,
  J., Choi, D.~H., Powell, R., Ewalds, T., Georgiev, P., et~al.
\newblock Grandmaster level in {StarCraft II} using multi-agent reinforcement
  learning.
\newblock \emph{Nature}, 575\penalty0 (7782):\penalty0 350--354, 2019.

\bibitem[Wei et~al.(2017)Wei, Hong, and Lu]{wei2017online}
Wei, C.-Y., Hong, Y.-T., and Lu, C.-J.
\newblock Online reinforcement learning in stochastic games.
\newblock \emph{Advances in Neural Information Processing Systems}, 30, 2017.

\bibitem[Wurman et~al.(2022)Wurman, Barrett, Kawamoto, MacGlashan, Subramanian,
  Walsh, Capobianco, Devlic, Eckert, Fuchs, et~al.]{wurman2022outracing}
Wurman, P.~R., Barrett, S., Kawamoto, K., MacGlashan, J., Subramanian, K.,
  Walsh, T.~J., Capobianco, R., Devlic, A., Eckert, F., Fuchs, F., et~al.
\newblock Outracing champion gran turismo drivers with deep reinforcement
  learning.
\newblock \emph{Nature}, 602\penalty0 (7896):\penalty0 223--228, 2022.

\bibitem[Xie et~al.(2020)Xie, Chen, Wang, and Yang]{xie2020learning}
Xie, Q., Chen, Y., Wang, Z., and Yang, Z.
\newblock Learning zero-sum simultaneous-move markov games using function
  approximation and correlated equilibrium.
\newblock In \emph{Conference on learning theory}, pp.\  3674--3682. PMLR,
  2020.

\bibitem[Zha et~al.(2021)Zha, Xie, Ma, Zhang, Lian, Hu, and
  Liu]{zha2021douzero}
Zha, D., Xie, J., Ma, W., Zhang, S., Lian, X., Hu, X., and Liu, J.
\newblock Douzero: Mastering doudizhu with self-play deep reinforcement
  learning.
\newblock In \emph{International Conference on Machine Learning}, pp.\
  12333--12344. PMLR, 2021.

\bibitem[Zhang \& Sandholm(2021)Zhang and Sandholm]{zhang2021subgame}
Zhang, B. and Sandholm, T.
\newblock Subgame solving without common knowledge.
\newblock \emph{Advances in Neural Information Processing Systems}, 34, 2021.

\bibitem[Zhang et~al.(2021)Zhang, Ren, and Li]{zhang2021gradient}
Zhang, R., Ren, Z., and Li, N.
\newblock Gradient play in stochastic games: stationary points, convergence,
  and sample complexity.
\newblock \emph{arXiv preprint arXiv:2106.00198}, 2021.

\bibitem[Zinkevich et~al.(2008{\natexlab{a}})Zinkevich, Johanson, Bowling, and
  Piccione]{cfr}
Zinkevich, M., Johanson, M., Bowling, M., and Piccione, C.
\newblock Regret minimization in games with incomplete information.
\newblock In \emph{Advances in Neural Information Processing Systems
  (NeurIPS)}, 2008{\natexlab{a}}.

\bibitem[Zinkevich et~al.(2008{\natexlab{b}})Zinkevich, Johanson, Bowling, and
  Piccione]{zinkevich2008regret}
Zinkevich, M., Johanson, M., Bowling, M., and Piccione, C.
\newblock Regret minimization in games with incomplete information.
\newblock In \emph{Advances in Neural Information Processing Systems
  (NeurIPS)}, 2008{\natexlab{b}}.

\end{thebibliography}
\bibliographystyle{icml2021}


\section{Proofs}\label{app:proofs}

We start by recalling a central theorem connecting regret to counterfactual regret (see, \emph{e.g.}, \citet{zinkevich2008regret,farina2019regret,farina2019online}).

\begin{proposition}\label{prop:cfr}
    Fix any player $i$, and let
    \[
        R_s^T := \max_{\hat a \in A_s} \sum_{t=1}^T r_i^c(\pi^t, s, \hat{a}) = \max_{\hat a \in A_s} \sum_{t=1}^T q_i^c(\pi^t, s, \hat{a}) - v_i^c(\pi^t, s)
    \]
    be the counterfactual regret accumulated up to time $T$ by the regret minimizer local at each information set $s$. Then, the regret
    \[
        R_i^T := \max_{\hat{pi}_i}\sum_{t=1}^T v_i(\hat \pi_i, \pi_{-i}^t) - v_i(\pi^t)
    \]
    accumulated by the policies $\pi^t$ on the overall game tree satisfies
    \[
        R_i^T \le \sum_{s} \max\{R_s^T, 0\}.
    \]
\end{proposition}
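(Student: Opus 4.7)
The plan is to establish the bound by a tree induction on a suitably chosen ``subtree counterfactual regret'' quantity that interpolates between the local regrets $R_s^T$ and the global regret $R_i^T$. First I would fix an arbitrary deviation $\hat\pi_i$ and a round $t$, and express the per-round value difference $v_i(\hat\pi_i, \pi_{-i}^t) - v_i(\pi^t)$ as a telescoping sum along the player-$i$ information sets of the game tree. Concretely, by processing the player-$i$ infosets in topological order and swapping $\pi_i^t$ for $\hat\pi_i$ one infoset at a time, one obtains an identity of the form
\[
v_i(\hat\pi_i, \pi_{-i}^t) - v_i(\pi^t)
= \sum_{s \in \mathcal{I}_i} \eta_i^{\hat\pi_i}(s) \sum_{h \in s}\eta_{-i}^{\pi^t}(h)\sum_a \bigl(\hat\pi_i(s,a) - \pi_i^t(s,a)\bigr)\,\bar q_i(s,a),
\]
where $\bar q_i(s,a)$ is the continuation value obtained by playing $\hat\pi_i$ in the subtree above-and-at $s$ (after action $a$) and $\pi^t$ strictly below. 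This is the standard ``performance difference'' identity for extensive-form games.

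Next, for each player-$i$ infoset $s$, define the subtree full counterfactual regret
\[
\tilde R_s^T \;:=\; \max_{\sigma}\; \sum_{t=1}^T \bigl[v_i^c(\sigma \diamond \pi^t,\, s) - v_i^c(\pi^t,\, s)\bigr],
\]
where $\sigma$ ranges over deviations from $\pi_i^t$ that agree with $\pi_i^t$ outside the subtree $D(s)$ of descendants of $s$, and $\sigma \diamond \pi^t$ denotes the hybrid policy. The main claim to prove by induction on the height of $s$ is
\[
\tilde R_s^T \;\le\; \sum_{s' \in D(s)} \max\{R_{s'}^T,\,0\}.
\]
The base case, where $s$ has no downstream player-$i$ infosets, reduces directly to $\tilde R_s^T = R_s^T$, since the only allowed deviation is the local action choice at $s$.

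For the inductive step, apply the telescoping identity from step~1 locally to $D(s)$ and reorder the sums. The optimal $\sigma$ decomposes into (i) an action choice at $s$, whose contribution to $\tilde R_s^T$ is exactly the local counterfactual regret and hence bounded by $R_s^T$, and (ii) independent deviations in each successor infoset subtree, whose contribution is bounded by the induction hypothesis $\tilde R_{s'}^T \le \sum_{s'' \in D(s')}\max\{R_{s''}^T,0\}$. The $\max\{\cdot,0\}$ enters because the optimizer is free to set $\sigma = \pi^t$ on any subtree where deviating would hurt, contributing $0$ rather than a negative quantity. Summing the local bound and the subtree bounds yields exactly $\sum_{s' \in D(s)}\max\{R_{s'}^T,0\}$, completing the induction.

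Finally, applying the claim at the roots of the player-$i$ information sets and observing that $R_i^T$ equals the sum (over those roots) of the respective subtree full counterfactual regrets yields $R_i^T \le \sum_s \max\{R_s^T, 0\}$. I expect the main obstacle to be the bookkeeping in the inductive step: making the hybrid policy $\sigma \diamond \pi^t$ and the ``localized'' telescoping identity precise, and verifying that the opponent-reach weights $\eta_{-i}^{\pi^t}(h)$ absorb correctly into the counterfactual values so the player-$i$ reach factors $\eta_i^{\hat\pi_i}(s)$ can be peeled off cleanly and combined with the local regret-matching bound without overcounting.
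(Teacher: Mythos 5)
The paper does not actually prove this proposition: it is stated as a recalled, known result with a pointer to \citet{zinkevich2008regret,farina2019regret,farina2019online}, so there is no in-paper argument to compare yours against. Your sketch is, in outline, the standard proof from Zinkevich et al.\ (their ``full counterfactual regret'' lemma), and the overall structure is sound: the key move is exactly the one you make, namely defining the subtree quantity $\tilde R_s^T$ with the maximum taken over a \emph{fixed} deviation $\sigma$ outside the sum over $t$, and inducting on tree height. Be aware that a naive application of your step-1 telescoping identity round by round would only yield $\sum_t \max_a r_i^c(\pi^t,s,a)$, which upper-bounds rather than lower-bounds $R_s^T = \max_a \sum_t r_i^c(\pi^t,s,a)$ and hence proves nothing; your induction avoids this precisely because the comparator is time-independent, but the write-up should make that distinction explicit. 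Two bookkeeping points deserve care in the inductive step: (i) the successor terms $\tilde R_{s'}^T$ enter multiplied by the player-$i$ reach probability from $s$ to $s'$ under $\sigma$, and you may only discard that factor (which is at most $1$) after passing to $\max\{\tilde R_{s'}^T,0\}$ --- so the positive part is needed for this reason in addition to the ``decline to deviate'' reason you give; and (ii) in the final step the equality $R_i^T = \sum_{\text{roots}} \tilde R_s^T$ relies on perfect recall to make the deviations at distinct root infosets independent. With those details filled in, the argument is correct.
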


We can now use a modification of the argument by~\citet{farina2020stochastic} to bound the degradation of regret due to the use of an estimator of the counterfactual regrets. However, our analysis requires some modifications compared to that of~\citet{farina2020stochastic}, in that ESCHER introduces estimation at the level of counterfactuals, while the latter paper introduces estimation at the level of the game utilities.

\thethm*
\begin{proof}
    As shown in Section~\ref{sec:tabular}, for any information set $s$ the counterfactual regret estimators $\hat r_i(\pi,s,a|h)$ are unbiased up to a time-independent multiplicative factor; specifically,
    \[
    \mathbf{E}_{h \sim \tilde{\pi}^i}[\hat{r}_i(\pi, s, a | h)] = w(s)r_i(\pi,s,a)
    \]
    for all actions $a$ available to player $i$ at world states in $s$.
    Hence, for each $a \in A_s$ we can construct the martingale difference sequences
    \[
        X_a^t := w(s)r_i(\pi^t,s,a) - \hat{r}_i(\pi^t, s, a| h).
    \]
    Clearly, $X^a_t$ is bounded, with $|X^a_t|$ upper bounded by (twice) the range $D$ of payoffs of player~$i$.
    Hence, from the Azuma-Hoeffding inequality, we obtain that the regret $R_s^T$ accumulated by the local policies produced by ESCHER with respect to the correct counterfactuals satisfies, for all $p\in(0,1)$
    \[
        \mathbf{P}\left[\sum_{t=1}^T X_a^t \le 2D\sqrt{2T\log \frac{1}{p}}\right] \ge 1-p,
    \]
    Using a union bound on the actions, we can therefore write
    \[
        \mathbf{P}\left[\max_a\sum_{t=1}^T X_a^t \le 2D\sqrt{2T\log \frac{|A_s|}{p}}\right] \ge 1-p,
    \]
    
    The left-hand side in the probability can be expanded as follows:
    \begin{align*}
        \max_a \sum_{t=1}^T X_a^t &= \max_a \left\{w(s) \sum_{t=1}^T r_i(\pi^t,s,a) - \sum_{t=1}^T \hat{r}_i(\pi^t, s, a | h)\right\}\\
            &\ge \max_a \left\{w(s) \sum_{t=1}^T r_i(\pi^t,s,a)\right\} - \max_a \left\{\sum_{t=1}^T \hat{r}_i(\pi^t, s, a | h)\right\}\\
            &\ge w(s) R_s^T - |A_s| D \sqrt{T},
    \end{align*}
    where the last inequality follows from the fact that the regret cumulated by regret matching (which is run on the regret estimates $\hat{r}_i$) is upper bounded by $|A_s| D \sqrt{T}$.
    Hence, we can write
    \begin{align*}
        &\mathbf{P}\left[ w(s) R_s^T - |A_s| D \sqrt{T} \le 2D\sqrt{2T\log \frac{|A_s|}{p}}\right] \ge 1 - p\\
        \iff\quad&
        \mathbf{P}\left[ R_s^T \le \frac{D|A_s|}{w(s)}  \sqrt{T} + \frac{2D}{w(s)}\sqrt{2T\log \frac{|A_s|}{p}}\right] \ge 1 - p,
    \end{align*}
    where in the second step we used the hypothesis that $w(s) > 0$ for all $s$.
    Since the right-hand size inside of the probability is non-negative, we can further write
    \[
        \mathbf{P}\left[ \max\{R_s^T, 0\} \le \frac{D|A_s|}{w(s)}  \sqrt{T} + \frac{2D}{w(s)}\sqrt{2T\log \frac{|A_s|}{p}}\right] \ge 1 - p,
    \]
    valid for every information set $s$.

    Now, using the known analysis of CFR (Proposition~\ref{prop:cfr}), we obtain that the regret accumulated by the ESCHER iterates satisfies
    \[
        R_i^T \le \sum_{s} \max\{0, R_s^T\}.
    \]
    Hence, using a union bound over all information sets $s \in \mathcal{S}_i$ of player $i$, we find that
    \begin{align*}
        &\mathbf{P}\left[ R_i^T \le \left(\sum_{s}\frac{D|A_s|}{w(s)} + \frac{2D}{w(s)}\sqrt{2\log\frac{|A_s||\mathcal S_i|}{p}}\right) \sqrt{T}\right] \ge 1 - p   \\
        \iff\quad &
        \mathbf{P}\left[ R_i^T \le \left(\frac{2D|A_s||\mathcal{S}_i|}{\min_s w(s)}\sqrt{2\log\frac{|A_s||\mathcal S_i|}{p}}\right) \sqrt{T}\right] \ge 1 - p,
    \end{align*}
    for all $p \in (0,1)$. Absorbing game-dependent and sampling-policy-dependent constants yields the statements.
\end{proof}

\subsection{Incorporating Function Approximation Errors}

We now adapt the correctness analysis above to keep into account errors in the approximation of the history value function by the deep neural network. We can model that error explictly by modifying~\eqref{eq:rhat} to incorporate a history-action-dependent error $\delta(h,a)$ as follows:
\[
    \hat{r}_i(\pi, s, a | z) = 
    \begin{cases}
    q_{i}(\pi, z[s], a) - v_i(\pi, z[s]) + \delta(z[s], a) & \text{if } z \in Z(s) \\
    0 & \text{otherwise}.
    \end{cases}
\]
Repeating the analysis of Section~\ref{sec:tabular}, we have
\begin{align*} 
\mathbf{E}_{z \sim \tilde{\pi}^i}[\hat{r}_i(\pi, s, a | z)] & = w(s)r^c(\pi, s, a) + w(s) \sum_{h \in s} \eta_{-i}^\pi(h)\delta(h, a).
\end{align*}
Propagating the error term throughout the analysis, assuming that each error term is at most $\epsilon > 0$ in magnitude, we obtain that for each infostate $s$ and $p \in (0,1)$,
\[
\mathbf{P}\left[ R_s^T \le \frac{(D+\epsilon)|A_s|}{w(s)} \sqrt{T} + \epsilon + \frac{2(D+\epsilon)}{w(s)}\sqrt{2T\log \frac{|A_s|}{p}}\right] \ge 1 - p,
\]

Again using the union bound across all infostates of player $i$, we obtain
\[
        \mathbf{P}\left[ R_i^T \le \left(\frac{2(D+\epsilon)|A_s||\mathcal{S}_i|}{\min_s w(s)}\sqrt{2\log\frac{|A_s||\mathcal S_i|}{p}}\right) \sqrt{T} + |\mathcal S_i|\epsilon\right] \ge 1 - p,
\]
showing that errors in the function approximation translate linearly into additive regret overhead.

\section{Related Work}
\label{sec:relwork}
Superhuman performance in two-player games usually involves two components: the first focuses on finding a model-free blueprint strategy, which is the setting we focus on in this paper. The second component improves this blueprint online via model-based subgame solving and search~\citep{burch2014solving, moravcik2016refining, brown2018depth, brown2020combining, brown2017safe, schmid2021player}. This combination of blueprint strategies with subgame solving has led to state-of the art performance in Go~\citep{silver2017mastering}, Poker~\citep{brown2017libratus, brown2018superhuman, moravvcik2017deepstack}, Diplomacy~\citep{gray2020human}, and The Resistance: Avalon~\citep{serrino2019finding}. Methods that only use a blueprint have achieved state-of-the-art performance on Starcraft~\citep{alphastar}, Gran Turismo~\citep{wurman2022outracing}, DouDizhu~\citep{zha2021douzero}, Mahjohng~\citep{li2020suphx}, and Stratego~\citep{mcaleer2020pipeline, perolat2022mastering}. Because ESCHER is a method for finding a blueprint, it can be combined with subgame solving and is complementary to these approaches. In the rest of this section we focus on other model-free methods for finding blueprints.      

Deep CFR~\citep{deep_cfr, steinberger2019single} is a general method that trains a neural network on a buffer of counterfactual values. However, Deep CFR uses external sampling, which may be impractical for games with a large branching factor, such as Stratego and Barrage Stratego. DREAM~\citep{steinberger2020dream} and ARMAC~\citep{gruslys2020advantage} are model-free regret-based deep learning approaches. ReCFR~\citep{liu2022model} propose a bootstrap method for estimating cumulative regrets with neural networks that could potentially be combined with our method. 

Neural Fictitious Self-Play (NFSP)~\citep{nfsp} approximates fictitious play by progressively training a best response against an average of all past opponent policies using reinforcement learning. The average policy converges to an approximate Nash equilibrium in two-player zero-sum games.   

Policy Space Response Oracles (PSRO)~\citep{psro, muller2019generalized, feng2021discovering, mcaleer2022anytime} are another promising method for approximately solving very large games. PSRO maintains a population of reinforcement learning policies and iteratively trains a best response to a mixture of the opponent's population. PSRO is a fundamentally different method than the previously described methods in that in certain games it can be much faster but in other games it can take exponentially long in the worst case. Neural Extensive Form Double Oracle (NXDO)~\citep{mcaleer2021xdo} combines PSRO with extensive-form game solvers, and could potentially be combined with our method. 

There is an emerging literature connecting reinforcement learning to game theory. QPG~\citep{srinivasan2018actor} shows that state-conditioned $Q$-values are related to counterfactual values by a reach weighted term summed over all histories in an infostate and proposes an actor-critic algorithm that empirically converges to a NE when the learning rate is annealed. NeuRD~\citep{hennes2020neural}, and F-FoReL~\citep{perolat2021poincare} approximate replicator dynamics and follow the regularized leader, respectively, with policy gradients. Actor Critic Hedge (ACH)~\citep{ach} is similar to NeuRD but uses an information set based value function. All of these policy-gradient methods do not have theory proving that they converge with high probability in extensive form games when sampling trajectories from the policy. In practice, they often perform worse than NFSP and DREAM on small games but remain promising approaches for scaling to large games \citep{perolat2022mastering}. 
Robust reinforcement learning~\citep{pinto2017robust}, seeks to train an RL policy to be robust against an adversarial environment. In future work we will look to apply ESCHER to this setting. 

Markov games (or stochastic games) are extensive-form games where the world state information is shared among all players at each timestep, but players take simultaneous actions. Recent literature has shown that reinforcement learning algorithms converge to Nash equilibrium in two-player zero-sum Markov games~\citep{brafman2002r, wei2017online, perolat2018actor, xie2020learning, daskalakis2020independent, jin2021v} and in multi-player general-sum Markov potential games~\citep{leonardos2021global, mguni2021learning, fox2022independent, zhang2021gradient, ding2022independent}. 

\section{Additional Experimental Results}

\subsection{Description of Game Instances}
We use Openspiel~\citep{lanctot2019openspiel} for all our games. Below we list the parameters used to define each game in Openspiel. 

\begin{description}
\item[Leduc] Game Name: \verb|leduc_poker| \\
Parameters: \verb|{"players": 2}|
\item[Battleship] Game Name: \verb|battleship| \\
Parameters: \verb|{"board_width": 2, "board_height": 2, | \\
\verb|"ship_sizes": "[2]", "ship_values": "[2]",| \\
\verb|"num_shots": 3, "allow_repeated_shots": False}|
\item[Liar's Dice] Game Name: \verb|liars_dice| \\
Parameters: \verb|None|
\item[Phantom Tic Tac Toe] Game Name: \verb|phantom_ttt| \\
Parameters: \verb|None|
\item[Dark Hex 4] Game Name: \verb|dark_hex| \\
Parameters: \verb|{"board_size": 4}|
\item[Dark Hex 5] Game Name: \verb|dark_hex| \\
Parameters: \verb|{"board_size": 5}|
\item[Dark Chess] Game Name: \verb|dark_chess| \\
Parameters: \verb|None|

\end{description}

\subsection{Tabular Experiments}
We compare a tabular version of ESCHER with oracle value functions to a tabular version of DREAM with oracle value functions and with OS-MCCFR. We run experiments on Leduc poker, Battleship, and Liar's dice, and use the implementations from OpenSpiel~\citep{lanctot2019openspiel}. We see in Figure \ref{fig:results} on the top row that ESCHER remains competitive with DREAM and OS-MCCFR on these games. On the bottom row we plot the average variance of the regret estimators over all information sets visited over an iteration window for each of these algorithms. While DREAM does improve upon OS-MCCFR, it still has orders of magnitude higher variance than ESCHER. Although this does not matter much in tabular experiments, we conjecture that high regret estimator variance makes neural network training unstable without prohibitively large buffer sizes. 

\begin{figure}[hpt!]
    \centering
    \subfigure[Leduc Exploitability]{\includegraphics[width=45mm]{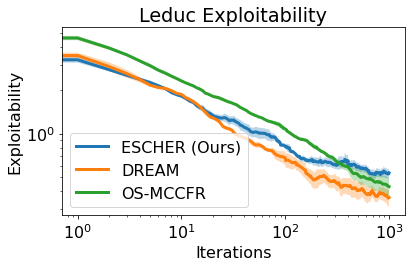}} 
    \subfigure[Battleship Exploitability]{\includegraphics[width=45mm]{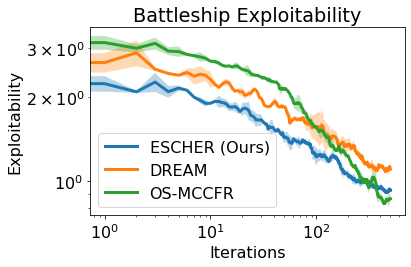}} 
    \subfigure[Liar's Dice Exploitability]{\includegraphics[width=45mm]{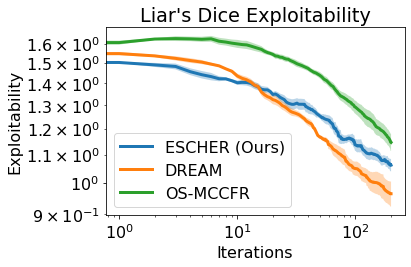}} 
    
    \medskip
    \subfigure[Leduc Variance]{\includegraphics[width=45mm]{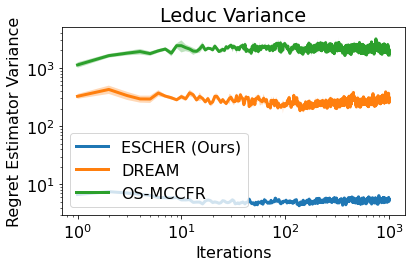}} 
    \subfigure[Battleship Variance]{\includegraphics[width=45mm]{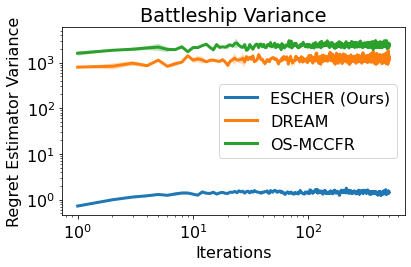}} 
    \subfigure[Liar's Dice Variance]{\includegraphics[width=45mm]{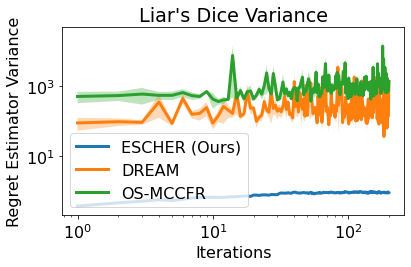}} 

    \caption{The tabular version of ESCHER with an oracle value function is competitive with the tabular version of DREAM with an oracle value function and with OS-MCCFR in terms of exploitability (top row). The regret estimator in ESCHER has orders of magnitude lower variance than those of DREAM and OS-MCCFR (bottom row).}
    \label{fig:results}
\end{figure}

\section{Hyperparameters for Deep Experiments}
For all deep experiments we first did a hyperparameter search that starts with good hyperparameters for flop hold 'em poker. We report the final hyperparameters used for the deep experiments. As described in the Dark Chess section, DREAM experiments on dark chess used $500$ batches for advantage and average training to not run out of memory. 

\subsection{ESCHER}

\begin{table}[H]
\centering
\begin{tabular}{p{10cm}>{\hfill}p{3cm}}
\toprule
\bf Parameter & \bf Value \\
\midrule
\ttfamily n regret network traversals & 1,000 \\
\ttfamily n history value network traversals & 1,000 \\
\ttfamily batch size regret network & 2048\\
\ttfamily batch size history value network & 2048\\
\ttfamily train steps regret network & 5,000\\
\ttfamily train steps history value network & 5,000\\
\ttfamily train steps average policy network & 10,000\\
\bottomrule
\end{tabular}
\caption{ESCHER and Ablations Hyperparameters for Phantom TTT, Dark Hex 4, Dark Hex 5}
\label{table:escher-hypers}
\end{table}

\begin{table}[H]
\centering
\begin{tabular}{p{10cm}>{\hfill}p{3cm}}
\toprule
\bf Parameter & \bf Value \\
\midrule
\ttfamily n regret network traversals & 1,000 \\
\ttfamily n history value network traversals & 1,000 \\
\ttfamily batch size regret network & 2048\\
\ttfamily batch size history value network & 2048\\
\ttfamily train steps regret network & 500\\
\ttfamily train steps history value network & 500\\
\ttfamily train steps average policy network & 10,000\\
\bottomrule
\end{tabular}
\caption{ESCHER and Ablations Hyperparameters for Variance Experiments}
\label{table:escher-hypers2}
\end{table}

When computing the value function, we random noise to the current policy to induce coverage over all information sets. To do this we added $0.01$ times a uniform distribution to the current policy and renormalized. 

\subsection{DREAM}
We use the codebase from the original DREAM paper~\citep{steinberger2020dream} with a wrapper to integrate with Openspiel~\citep{lanctot2019openspiel} and rllib~\citep{rllib}. When otherwise specified, we use default parameters from the DREAM codebase. 

\begin{table}[H]
\centering
\begin{tabular}{p{10cm}>{\hfill}p{3cm}}
\toprule
\bf Parameter & \bf Value\\
\midrule
\ttfamily n batches adv training & 4,000 \\
\ttfamily n traversals per iter & 1,000 \\
\ttfamily n batches per iter baseline & 1,000 \\ 
\ttfamily periodic restart & 10 \\ 
\ttfamily max n las sync simultaneously & 12 \\
\ttfamily mini batch size adv & 10,000 \\
\ttfamily max buffer size adv & 2,000,000\\
\ttfamily mini batch size avrg & 10,000\\
\ttfamily max buffer size avrg & 2,000,000\\
\ttfamily batch size baseline & 2048\\
\ttfamily n batches avrg training & 4000\\
\bottomrule
\end{tabular}
\caption{DREAM Hyperparameters for Phantom TTT, Dark Hex 4, Dark Hex 5}
\label{table:dream-pttt}
\end{table}

\subsection{NFSP}
We use our own implementation of NFSP that uses RLLib's~\citep{rllib} DQN implementation and outperforms the original paper's results on Leduc poker. 

\begin{table}[H]
\centering
\begin{tabular}{p{10cm}>{\hfill}p{3cm}}
\toprule
\bf Parameter & \bf Value \\
\midrule
\ttfamily circular buffer size & 2e5 \\
\ttfamily total rollout experience gathered each iter & 1024 steps \\
\ttfamily learning rate & 0.01 \\
\ttfamily batch size & 4096 \\
\ttfamily TD-error loss type & MSE \\
\ttfamily target network update frequency & every 10,000 steps \\
\ttfamily RL learner params & DDQN \\
\ttfamily anticipatory param & 0.1  \\
\ttfamily avg policy reservoir buffer size & 2e6 \\
\ttfamily avg policy learning starts after & 16,000 steps \\
\ttfamily avg policy learning rate & 0.1 \\
\ttfamily avg policy batch size & 4096 \\
\ttfamily avg policy optimizer & SGD \\
\bottomrule
\end{tabular}
\label{nfsp-params}
\caption{NFSP Hyperparameters for Phantom TTT, Dark Hex 4, Dark Hex 5}
\end{table}

\subsection{Dark Chess}
 Hyperparameters are the same as in other deep experiments (described above), except DREAM experiments on dark chess used $500$ batches for advantage and average training to not run out of memory. For these experiments only the current observation was passed in to the network for each method. As a result, we cannot expect these algorithms to learn a strong strategy on dark chess, but it is still a fair comparison. In future work we plan on doing more engineering to include more information to the networks. 

\section{Code}
We will include a GitHub link to our open source code under the MIT license in the deanonymized version of this work. 
Our code is built on top of the OpenSpiel~\citep{lanctot2019openspiel} and Ray~\citep{ray} frameworks, both of which are open source and available under the Apache-2.0 license.
\end{document}